\theoremstyle{definition} 
    \newtheorem{definition}{Definition}
\theoremstyle{plain} 
    \newtheorem{theorem}[definition]{Theorem}
    \newtheorem{lemma}[definition]{Lemma}
    \newtheorem{corollary}[definition]{Corollary}
\theoremstyle{remark} 
    \newtheorem{remark}[definition]{Remark}
     \newcommand{\filt}[1]{(#1)}
     \newcommand{\filtint}[1]{M^{(#1)}}
\title{On the properties of coframes}
\author{Giovanni Canepa}
\affil{Universit\'e de Gen\`eve, Section de Math\'ematiques\\
\href{mailto:giovanni.canepa.math@gmail.com}{giovanni.canepa.math@gmail.com}}
\begin{document}

\maketitle

\begin{abstract}
    We consider injectivity and surjectivity of some maps on the exterior algebra of isomorphic finite-dimensional vector spaces. We prove the properties of the maps in full generality, for any dimension of the vector space and any subspace. We also draw a connection with the  Palatini--Cartan formulation of General Relativity, for which these maps are of crucial importance. 

\end{abstract}

\section{Introduction}

Let $V$ and $Z$ be two $N$-dimensional vector spaces and let $e:V \rightarrow Z$ be an isomorphism between them ($N\in \mathbb{N}$). This isomorphism can be naturally viewed as an element of $$e \in V^* \otimes Z.$$ 
We call such an element a \emph{coframe}\footnote{This name comes from the physical interpretation described in Section \ref{s:PC_gravity}}. The map $e$ can be extended to the exterior algebra of $V^*$ and $Z$ as follows. For $n,k\leq N-s$, define 
\begin{align}\label{e:def_W_intro1}
    W_{s}^{(n,k)}: \textstyle{\bigwedge^n} V^* \otimes \textstyle{\bigwedge^k} Z & \longrightarrow \textstyle{\bigwedge^{n+s}} V^* \otimes \textstyle{\bigwedge^{k+s}} Z \\
    X  & \longmapsto   X \wedge \underbrace{e \wedge \dots \wedge e}_{s-times}. \nonumber
\end{align}

Let now $V^l \subset V$ be a subspace of $V$ of codimension $l<N$. We can restrict $e$ to this subspace and get an injective map $e|_{V^l}: V^l \rightarrow Z$. As before this map can be viewed as an element of $$e|_{V^l}\in (V^l)^* \otimes Z.$$ We can then generalize the maps \eqref{e:def_W_intro1} as (for $n\leq N-s-l$ and $k\leq N-s$)
\begin{align}\label{e:def_W_intro}
    W_{s}^{l,(n,k)}: \textstyle{\bigwedge^n} (V^l)^* \otimes \textstyle{\bigwedge^k} Z & \longrightarrow \textstyle{\bigwedge^{n+s}} (V^l)^* \otimes \textstyle{\bigwedge^{k+s}} Z \\
    X  & \longmapsto   X \wedge \underbrace{e|_{V^l} \wedge \dots \wedge e|_{V^l}}_{s-times}. \nonumber
\end{align}

The importance of these maps is related to the role that they play in the coframe formalism of General Relativity.
A natural question that arises is under which conditions these maps are injective, surjective or isomorphism. The goal of this note is to provide and prove such conditions. In particular we will prove the following theorem.
\begin{theorem}\label{thm:Main_intro}
    The map $W_{s}^{l (n,k)}$ defined in \eqref{e:def_W_intro} is surjective if and only if 
    \begin{align}
        n+k \geq N - s
    \end{align}
    and it is injective if and only if 
    \begin{align}
        n+k \leq N  -l- s . 
    \end{align}
\end{theorem}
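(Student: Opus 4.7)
The plan is to reduce to the case $l = 0$ by decomposing $Z$, and then to prove the $l = 0$ case using the Lefschetz $\mathfrak{sl}_2$-action on the bigraded algebra $A := \bigwedge^{\bullet} V^* \otimes \bigwedge^{\bullet} V$.

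For the reduction, set $U := e(V^l) \subset Z$ and pick a complement $U' \subset Z$, so that $\dim U = N - l$ and $\dim U' = l$. Using $e|_{V^l}$ to identify $(V^l)^* \cong U^*$, the element $e|_{V^l}$ becomes the identity $\sum_i u^i \otimes u_i \in U^* \otimes U$. Since $e|_{V^l}$ lives entirely in the $U$-factor of $Z$, the decomposition $\textstyle{\bigwedge}^k Z = \bigoplus_q \textstyle{\bigwedge}^{k-q} U \otimes \textstyle{\bigwedge}^q U'$ is preserved by $W_s^{l,(n,k)}$, which therefore splits as a direct sum, indexed by $q$, of maps
\[
L_q : \textstyle{\bigwedge}^n U^* \otimes \textstyle{\bigwedge}^{k-q} U \longrightarrow \textstyle{\bigwedge}^{n+s} U^* \otimes \textstyle{\bigwedge}^{k-q+s} U,
\]
each obtained by wedging with $\mathrm{id}_U^{\wedge s}$. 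Each $L_q$ is precisely the ``$l = 0$'' case for $U$ (of dimension $N - l$) at bidegree $(n, k - q)$. Injectivity (resp.\ surjectivity) of $W_s^{l,(n,k)}$ thus reduces to injectivity (resp.\ surjectivity) of every $L_q$ in the relevant range of $q$, with the tightest bound coming from the smallest (resp.\ largest) admissible $q$. The surjectivity direction has a subtlety: target summands with $q > k$ are orphan (zero source), and one needs that range to be empty; this is exactly where the assumption $n \leq N - l - s$ combined with the hypothesis $n + k \geq N - s$ forces $k \geq l$.

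For the $l = 0$ case, equip $A$ with an $\mathfrak{sl}_2$-triple $(L, H, \Lambda)$: let $L$ wedge with $\mathrm{id}_V$, let $\Lambda$ contract with $\sum_i e_i \otimes e^i \in V \otimes V^*$ (interior product on both tensor factors), and let $H$ act on the bidegree-$(n,k)$ summand by the scalar $n + k - N$. A direct Leibniz-rule computation gives $[L, \Lambda] = H$, $[H, L] = 2L$, $[H, \Lambda] = -2\Lambda$. Decompose $A$ into irreducible $\mathfrak{sl}_2$-modules $V_m$ (with $H$-weights $-m, -m+2, \dots, m$); by exhibiting primitive vectors $u^I \otimes u_J$ with $I \cap J = \emptyset$, one checks that every $m \in \{0, 1, \dots, N\}$ is realized, with primitives occurring in every bidegree $(n, k)$ satisfying $n + k \leq N$. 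On each $V_m$, $L^s$ takes the weight-$j$ line isomorphically to the weight-$(j + 2s)$ line when both belong to $[-m, m]$, and to zero otherwise. Applied to $A_{n,k}$ (weight $n + k - N$) and $A_{n+s, k+s}$ (weight $n + k + 2s - N$), a short case analysis on the signs of $n + k - N$ and $n + k + 2s - N$ shows that the smallest $m$ hitting $A_{n,k}$, namely $|n + k - N|$, satisfies $|n + k - N| \geq |n + k + 2s - N|$ iff $n + k \leq N - s$ (giving injectivity), and symmetrically that surjectivity holds iff $n + k \geq N - s$.

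The main obstacle is the $\mathfrak{sl}_2$ step: verifying the triple relations and, critically, checking that all values $m \in \{0, 1, \dots, N\}$ actually appear in the Lefschetz decomposition of $A$. Once the $\mathfrak{sl}_2$-triple and primitives are in place, both bounds drop out of the weight-space bookkeeping; combined with the reduction of the first step (and the observation that $n \leq N - l - s$ together with $n + k \geq N - s$ forces $k \geq l$), this yields the full theorem.
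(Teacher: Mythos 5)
Your proposal is correct, but it takes a genuinely different route from the paper. The paper works throughout in the standard basis adapted to $e$: necessity is proved by exhibiting explicit kernel elements (resp.\ unreachable components), and sufficiency is reduced to inverting a family of subset-incidence matrices, for which an explicit one-sided inverse with the iteratively defined coefficients $C_i$ of \eqref{e:coeff_inverses} is written down (Lemmas \ref{lem:solution_system_inj} and \ref{lem:solution_system_surj}); it then redoes the index combinatorics separately for general $l$ and $s$. You instead (i) split off the codimension via $Z=e(V^l)\oplus U'$, so that $W_s^{l,(n,k)}$ becomes a direct sum of $l=0$ maps for the $(N-l)$-dimensional space $e(V^l)$ tensored with $\bigwedge^q U'$ --- a reduction the paper never performs --- and (ii) settle $l=0$ by the classical $\mathfrak{sl}_2$ (hard Lefschetz) action on $\bigwedge(V^*\oplus V)$, replacing the paper's explicit matrix inverse by weight-space bookkeeping in irreducible modules. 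I checked both steps: in the reduction, the orphan target summands with $q>k$ and the fact that the binding constraint sits at $q=0$ for injectivity and $q=\min(k,l)$ for surjectivity (together with the degenerate case $k>N-l$) reproduce exactly the stated inequalities, including the role of $k\geq l$; the weight analysis then gives the $l=0$ case. One point you should make explicit: your primitives $u^I\otimes u_J$ with $I\cap J=\emptyset$ only live in bidegrees with $n+k\leq N$, so for $n+k>N$ the claim that the minimal highest weight meeting $A_{n,k}$ is $|n+k-N|$ needs the additional remark that the $\mathfrak{sl}_2$-action preserves $n-k$, so the irreducible generated by the primitive in bidegree $(N-k,N-n)$ (same $n-k$, total degree $\leq N$) has its top weight line precisely in $A_{n,k}$; this is a one-line fix, not a gap in the strategy. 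As for what each approach buys: yours is more conceptual and avoids the verification of the inverse \eqref{e:inj_leftinverse}, which the paper leaves as ``a long but straightforward computation''; the paper's argument is elementary and self-contained and yields explicit inversion formulas, which may be of independent use in the Palatini--Cartan applications.
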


\begin{corollary}
    The map $W_{s}^{l (n,k)}$ is an isomorphism if and only if $l=0$ and $n+k=N-s$.
\end{corollary}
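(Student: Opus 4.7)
The plan is to derive the corollary directly from Theorem \ref{thm:Main_intro}, using the elementary fact that a linear map between finite-dimensional vector spaces is an isomorphism if and only if it is simultaneously injective and surjective.

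For the ``only if'' direction, I would start by assuming $W_s^{l,(n,k)}$ is an isomorphism. Then both biconditionals in the theorem apply, yielding $n+k \geq N-s$ and $n+k \leq N-l-s$ at the same time. Chaining these inequalities gives $N-s \leq N-l-s$, which forces $l \leq 0$; since $l$ is a nonnegative integer by hypothesis, this pins down $l=0$, and the sandwich then collapses to $n+k = N-s$.

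For the converse, I would simply observe that if $l=0$ and $n+k=N-s$, both conditions of Theorem \ref{thm:Main_intro} are met (the injectivity bound $n+k \leq N-s$ becoming an equality), so the map is injective and surjective, hence an isomorphism. As a dimensional sanity check, in this case one has $\binom{N}{n+s} = \binom{N}{N-n-s} = \binom{N}{k}$ and symmetrically $\binom{N}{k+s} = \binom{N}{n}$, so the source and target of $W_s^{0,(n,k)}$ indeed share the same dimension, consistent with bijectivity.

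There is no real obstacle here: the corollary is a one-line logical consequence of Theorem \ref{thm:Main_intro}, and the genuine content -- namely the sharp bounds for surjectivity and injectivity -- is entirely absorbed into the statement of that theorem.
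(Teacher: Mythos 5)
Your argument is correct and is precisely the intended one: the corollary follows immediately by combining the two biconditionals of Theorem \ref{thm:Main_intro}, whose conjunction forces $N-s \leq n+k \leq N-l-s$, hence $l=0$ and $n+k=N-s$. The paper treats this as an immediate consequence of the theorem, so there is nothing further to add.
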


The proof of Theorem \ref{thm:Main_intro} is divided in necessary and sufficient condition with the first rather easy and the second more involved. In particular the latter is proved by reducing the problem to the one of inverting a suitable matrix. 

More in detail, in Section \ref{s:conventions}, we introduce some useful notation and a basis for the vector spaces adapted to the isomorphism $e$. In Section \ref{s:technical_lemmas} we then introduce and prove some technical lemmas about the invertibility of a special class of matrices which will be useful in Section \ref{s:proof_theorem} which contains the core proofs of Theorem \ref{thm:Main_intro}.

Finally, in Section \ref{s:corollaries} we prove some corollaries and in Section \ref{s:PC_gravity} we draw the connection of these results with the theory of Palatini--Cartan General Relativity and show some examples of applications of the main theorem.

\subsection*{Aknowledgements}
I am grateful to M. Capoferri, A. S. Cattaneo, F. Fila-Robattino and M. Schiavina for the interesting discussions and suggestions.
The author acknowledges partial support of SNF Grant No P500PT 203085 and No P5R5PT 222221.

\section{Definitions and conventions} \label{s:conventions}

In order to have a less cumbersome notation, let us introduce some notation for the spaces we are considering. Define
\begin{align*}
    \Omega_l^{n,k}:= \textstyle{\bigwedge^n} (V^l)^* \otimes \textstyle{\bigwedge^k} Z \qquad \text{for } n\leq N-l \text{ and } k \leq N.
\end{align*}
Then on these spaces, the maps $W_{s}^{l (n,k)}$ are defined for $s<N$, $n\leq N-s-l$ and $k\leq N-s$ as follows:
\begin{align}
    W_{s}^{l (n,k)}: \Omega_l^{n,k}  & \longrightarrow \Omega_l^{n+s,k+s} \label{e:definition_W}\\
    X  & \longmapsto   X \wedge \underbrace{e \wedge \dots \wedge e}_{s-times}, \nonumber
\end{align}
where we omitted writing the restriction of $e$ to the corresponding subspace $V^{l}$.

We now define a suitable basis on $Z$, constructed out of the isomorphism $e$, in which we will prove the claims. 

The isomorphism $e$, given a basis of $V$, defines a preferred basis on $Z$. Indeed, let $\{v_1, \dots, v_N\}$ be a basis of $V$, then $\{e(v_1), \dots , e(v_N)\}$ is a basis of $Z$. For simplicity of notation we will denote such vectors $\{e_1, \dots, e_N\}$

In the case of a codimension $l$ subspace $V^l$, we can start from a basis $\{v_1, \dots, v_{N-l}\}$ of $V^l$ and then applying $e$ we get $N-l$ linearly independent vectors $\{e_1, \dots, e_{N-l}\}$. We can then complete this basis with vectors $e_{N-l+1} \dots e_N$.\footnote{Note that we do not distinguish this vectors by changing their name, but by convention we distinguish them by the index.}

Hence, fixed a basis on $V$ (and on subspaces), we  call this basis the \emph{standard basis} of $Z$ and, unless otherwise stated, the components will always be taken with respect to this basis.

Hence, in the standard basis, a generic element $X \in \Omega_{l}^{n,k}$ can be written in components as
    \begin{align}\label{e:generic_components}
        X=\sum_{\substack{\mu_1 \dots \mu_n \\c_1 \dots c_k}} X_{\mu_1 \dots \mu_n}^{c_1 \dots c_k} e_1 \wedge \dots \wedge e_k  (v^*)^{\mu_1}\wedge\dots \wedge (v^*)^{\mu_n}
    \end{align}
    where $1 \leq \mu_1 \dots \mu_n\leq N-l$ and $1 \leq c_1 \dots c_k\leq N$ and $(v^*)$ denotes a dual basis to $v$.
    Note that all the indexes appearing in the same row must be different because of antisymmetry.

    Note also, that with respect to this basis we have
    \begin{align}\label{e:e_components}
        e= \sum_{c, \mu}\delta_\mu^c e_c (v^*)^{\mu}  = \sum_{\mu} e_{\mu} (v^*)^{\mu}
    \end{align}
    where $\delta$ is the Kronecker delta. Indeed we get $e(v_i)= \sum_{\mu} e_{\mu} \delta^{\mu}_i = e_i$ as required. 

    Using \eqref{e:generic_components} and \eqref{e:e_components} we can give an expression for $W_{s}^{l (n,k)}(X)$:
    \begin{align}\label{e:Ws_formula}
        W_{s}^{l (n,k)}(X)= \sum_{\substack{\mu_1 \dots \mu_n\\  c_1 \dots c_k\\\mu_{n+1}\dots \mu_{n+s}}}(-1)^{ns} X_{\mu_1 \dots \mu_n}^{c_1 \dots c_k} e_1 \wedge \dots \wedge e_k \wedge e_{\mu_{n+1}}\wedge \dots e_{\mu_{n+s}}(v^*)^{\mu_1} \wedge \dots (v^*)^{\mu_{n+s}}
    \end{align}
    where now in every summand, $\mu_{n+1}, \dots \mu_{n+s}$ are not equal to any $\mu_j$ and any $c_i$ and the indexes $\mu_1 \dots \mu_{n+s}$ take value in $\{1 \dots N-l\}$.

    In order to have a cleaner notation in the proofs in Section \ref{s:proof_theorem} we will loosely use the notation $ W_{s}^{l (n,k)}(X_{\mu_1 \dots \mu_n}^{c_1 \dots c_k})$  for the image under the map $W_{s}^{l (n,k)}$ of an element $X$ with the only nonzero component $X_{\mu_1 \dots \mu_n}^{c_1 \dots c_k}$.




We are now ready to prove the main Theorem.
We divide the proof in some lemmas, starting from some combinatorial and linear-algebraic ones in Section \ref{s:technical_lemmas} and then we prove injectivity in Section \ref{s:injectivity_Lemmas} and surjectivity in Section \ref{s:surjectivity_lemmas}. Since the presence of many indexes can be a little confusing, we will prove the lemmas by first consider some simpler particular cases (usually $l=0$ and $s=1$) and the generalize the result.

\section{Technical Results}\label{s:technical_lemmas}

\begin{lemma}\label{lem:binomial_dimensions}
    Let $m,s,p \in \mathbb{N}$, $m,p,s \geq 1$, and define $q=m+p-1-s$. Then
    \begin{align*}
        \binom{m+q}{m} \geq \binom{m+q}{m-s}.
    \end{align*}
\end{lemma}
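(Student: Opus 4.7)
The plan is to compare the two binomial coefficients by writing their ratio as a ratio of two products of $s$ consecutive integers, and then showing the numerator dominates the denominator term by term. The potential boundary case $m<s$ is handled separately, since then $\binom{m+q}{m-s}=0$ and the inequality is trivial; so the interesting case is $m\geq s$.

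Assuming $m\geq s$, I would compute
\begin{equation*}
\frac{\binom{m+q}{m}}{\binom{m+q}{m-s}} \;=\; \frac{(m-s)!\,(q+s)!}{m!\,q!} \;=\; \frac{(q+1)(q+2)\cdots(q+s)}{(m-s+1)(m-s+2)\cdots m},
\end{equation*}
so that both the numerator and the denominator are products of $s$ consecutive positive integers. Substituting the definition $q=m+p-1-s$, the numerator becomes $(m+p-s)(m+p-s+1)\cdots(m+p-1)$, while the denominator is $(m-s+1)(m-s+2)\cdots m$.

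The key observation is then that, for each $i\in\{1,\dots,s\}$, the $i$-th factor of the numerator exceeds the $i$-th factor of the denominator by exactly $(q+i)-(m-s+i)=q+s-m=p-1\geq 0$. Since all factors in the denominator are positive (thanks to $m\geq s\geq 1$), the ratio is at least $1$, which gives the claim.

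I do not anticipate any real obstacle: the only subtle point is making sure the degenerate cases ($m<s$, or $q<0$ which forces $\binom{m+q}{m}=0$ but also $q+s=m+p-1\geq m$ so the denominator vanishes as well) are dispatched at the beginning so that the ratio manipulation above is legitimate. Once in the regime $m\geq s$ and $q\geq 0$, the term-by-term comparison is immediate.
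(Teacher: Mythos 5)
Your proof is correct, but it follows a genuinely different route from the paper. The paper argues via unimodality: it splits into the three cases $p>s$, $p=s$, $p<s$, and in each case locates $m$ and $m-s$ relative to the middle of row $m+q$, using the monotone growth of $\binom{n}{k}$ in $k$ up to $k=n/2$ together with the symmetry $\binom{n}{k}=\binom{n}{n-k}$. You instead dispatch the degenerate case $m<s$ (where $\binom{m+q}{m-s}=0$) and then compare the coefficients directly through
\[
\frac{\binom{m+q}{m}}{\binom{m+q}{m-s}}=\frac{(q+1)(q+2)\cdots(q+s)}{(m-s+1)(m-s+2)\cdots m},
\]
noting that the $i$-th numerator factor exceeds the $i$-th denominator factor by $q-(m-s)=p-1\ge 0$. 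This is more elementary and uniform: no case analysis is needed, and the role of the hypothesis $p\ge 1$ becomes completely transparent, whereas the paper's argument buys a picture of where the two coefficients sit in the (unimodal) row of Pascal's triangle. One small remark on your closing caveat: the worry about $q<0$ is vacuous, since once $m\ge s$ the hypothesis $p\ge 1$ gives $q=m+p-1-s\ge m-s\ge 0$, and $q<0$ can only occur when $m<s$, a case you have already handled (the justification you sketch there, that $q+s\ge m$ makes ``the denominator vanish,'' is not quite the right reason --- the point is simply that $q<0$ forces $m<s$, whence $\binom{m+q}{m-s}=0$); with that cosmetic fix the argument is complete.
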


\begin{proof}
Let us prove the claim dividing three cases: $p>s$, $p=s$ and $p<s$.
\begin{itemize}
    \item Let $p>s$, i.e. let $p=s+r$ with $r>0$. Note that $m+q= 2m +p-1-s= 2m-1+r$. Since $r>0$ we have $\frac{r-1}{2}\geq 0$ and hence $m \leq \frac{2m-1+r}{2}$. Similarly, using $s\geq 1$ we get $m-s \leq \frac{2m-1+r}{2}$ and $m-s<m$. Hence for the properties of the binomial coefficient (monotone increasing in the second argument until the half of the first argument) we conclude that for $p>s$
        \begin{align*}
        \binom{m+q}{m} \geq \binom{m+q}{m-s}.
        \end{align*}
    \item Let now $p=s$. In this case we get $m+q=2m-1$ and by the properties of the binomial, $\binom{2m-1}{m}$ is bigger than any other binomial $\binom{2m-1}{n}$ for any $n \neq m$, $0\leq n \leq 2m-1$.
    \item Finally let $p<s$, i.e. let $p=s-r$ with $0 < r \leq s-1$. In this case $m+q= 2m +p-1-s= 2m-1-r$. Recall that $\binom{2m-1-r}{m} = \binom{2m-1-r}{m-r-1}$. As before we get the following inequalities:
    \begin{align*}
        m-r-1 < \frac{2m-r-1}{2} \qquad m-s < \frac{2m-r-1}{2} \qquad m-r-1 \geq  m-s.
    \end{align*}
    Hence we conclude with the same argument of the first case that also for $p<s$
        \begin{align*}
        \binom{m+q}{m} \geq \binom{m+q}{m-s}.
        \end{align*}
\end{itemize}
\end{proof}

In order to prove injectivity or surjectivity of the maps \eqref{e:definition_W} we will reduce the problem to a linear algebra one and we will then have to use right or left invertibility of matrices associated to certain systmes. The following results will give us the necessary tools to do so. In the next lemmas we will need the following coefficients $C_i$ ($i=0,\dots q$) defined iteratively as follows:
\begin{subequations}\label{e:coeff_inverses}
    \begin{align} 
        C_0&= \binom{q+s}{s}^{-1} \\
        C_i&= (-1)^i \binom{q+s-i}{s}^{-1} \sum_{j=0}^{i-1} \binom{i}{j} C_j \binom{q+s-i}{s-i+j}.
    \end{align}
    \end{subequations}

\begin{lemma}\label{lem:solution_system_inj}
    Let $m,s,p \in \mathbb{N}$, $m,p,s \geq 1$, and define $q=m+p-1-s$. Let $K$ be a set of $m+q$ indexes. For every subset with $m$ elements $K_m$ define the following equation
    \begin{align}\label{e:system_equations_inj}
        \sum_{K_{m-s}\subset K_m}x^{k_1 \dots k_{m-s}}=0
    \end{align}
    where the sum runs over all possible subsets $K_{m-s}$ of $K_m$ with $m-s$ elements and $K_{m-s}=\{k_1 \dots k_{m-s}\}$. The system with one such equation for each $K_m \subset K$ has as unique solution $x^{k_1 \dots  k_{m-s}}=0$ for all $k_1 \dots k_{m-s} \in K$.
\end{lemma}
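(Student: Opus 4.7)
The plan is to prove, for each $(m-s)$-subset $L\subset K$, that the corresponding variable $x^L$ must vanish. To do so I construct an explicit linear combination of the given equations whose left-hand side collapses to $x^L = 0$, using the coefficients $C_i$ from \eqref{e:coeff_inverses} as weights that depend only on the ``distance'' $|L\setminus K_m|$ between $L$ and the indexing set $K_m$ of each equation. Since $L$ is arbitrary, this gives the conclusion.

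Concretely, I consider the weighted sum $\sum_{K_m\subset K} C_{|L\setminus K_m|}\bigl(\sum_{K_{m-s}\subset K_m} x^{K_{m-s}}\bigr) = 0$. To read off the coefficient of a fixed variable $x^{K_{m-s}}$, I set $r := |L\setminus K_{m-s}|$ and parameterize each $K_m\supset K_{m-s}$ by the number $u$ of elements it contains from $L\setminus K_{m-s}$. Decomposing $K_m = K_{m-s}\sqcup U\sqcup V$ with $U\subset L\setminus K_{m-s}$ of size $u$ and $V\subset K\setminus(K_{m-s}\cup L)$ of size $s-u$, and using that $|K\setminus(K_{m-s}\cup L)| = q+s-r$ together with $|L\setminus K_m| = r-u$, the count becomes
\begin{align*}
    \sum_t \binom{r}{t}\binom{q+s-r}{s-r+t}\,C_t
\end{align*}
after the substitution $t = r-u$. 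At $r = 0$ this reduces to $\binom{q+s}{s}\,C_0 = 1$, which is precisely the desired coefficient of $x^L$.

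For $r\geq 1$ the coefficient must vanish, and this is exactly what the recursive definition \eqref{e:coeff_inverses} is designed to enforce: isolating the $t = r$ term and solving gives the formula for $C_r$. The recursion is well-defined because the inverse $\binom{q+s-r}{s}^{-1}$ that appears requires $q+s-r \geq s$, i.e.\ $r\leq q$, and since $K_{m-s}$ has size $m-s$ we have $r\leq m-s$; the hypothesis $p\geq 1$ combined with $q = m+p-1-s$ then gives $m-s\leq q$ as required, so every $C_r$ that is needed is defined. The main technical obstacle is the inductive verification that the combinatorial coefficient computed for $x^{K_{m-s}}$ indeed matches \eqref{e:coeff_inverses} with the correct signs — a routine but delicate bookkeeping with binomial coefficients, naturally organized as induction on $r$.
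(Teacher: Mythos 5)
Your strategy coincides with the paper's: for a fixed target $(m-s)$-subset $L$ you exhibit an explicit one-sided inverse by weighting the equation labelled by $K_m$ with $C_{|L\setminus K_m|}$, which is exactly the content of \eqref{e:inj_leftinverse}. Your reduction of the coefficient of a variable $x^{M}$, $r=|L\setminus M|$, to
\begin{equation*}
\sum_{t}\binom{r}{t}\binom{q+s-r}{s-r+t}\,C_t ,
\end{equation*}
via the decomposition $K_m=M\sqcup U\sqcup V$ is correct, and it is precisely the ``long but straightforward computation'' that the paper omits; your observation that only $C_0,\dots,C_q$ are ever needed (since $r\le m-s\le q$ when $p\ge 1$) is also a point the paper does not make explicit. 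The $r=0$ case, $\binom{q+s}{s}C_0=1$, is fine.

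The gap is in the final matching step. Isolating the $t=r$ term in the vanishing condition for $r\ge 1$ yields $C_r=-\binom{q+s-r}{s}^{-1}\sum_{t=0}^{r-1}\binom{r}{t}\binom{q+s-r}{s-r+t}C_t$, i.e.\ an overall factor $-1$, whereas \eqref{e:coeff_inverses} carries $(-1)^i$; the two recursions agree only for odd $i$, so your claim that the vanishing is ``exactly'' what \eqref{e:coeff_inverses} enforces fails at the first even step. Concretely, take $s=1$, $m=3$, $p=1$ (so $q=2$, $|K|=5$, variables indexed by $2$-subsets, equations by $3$-subsets): the weights recovering $x^{12}$ must satisfy $3C_0=1$, $C_0+2C_1=0$, $2C_1+C_2=0$, giving $C_0=\tfrac13$, $C_1=-\tfrac16$, $C_2=+\tfrac13$, while \eqref{e:coeff_inverses} gives $C_2=-\tfrac13$; with that value the coefficient of $x^{34}$ in your combination is $2C_1+C_2=-\tfrac23\neq 0$. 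So the ``routine but delicate bookkeeping'' you defer cannot be completed as stated. The repair is immediate: define the weights by the recursion you yourself derive (equivalently, read \eqref{e:coeff_inverses} with an overall minus sign in place of $(-1)^i$); with that correction your argument is complete and in fact fixes a sign slip in the paper's coefficients, which the paper's own unverified ``straightforward computation'' claim conceals.
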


\begin{proof}
    First of all we notice that the system has $\binom{m+q}{m}$ equations (the number of possible $m$-elements subsets of $K$) and $\binom{m+q}{m-s}$ unknowns  (the set of possible $(m-s)$-elements subsets of $K$).  By Lemma \ref{lem:binomial_dimensions} we then get that the system has more equations that variables (or an equal number). Let $A$ be the $\binom{m+q}{m} \times \binom{m+q}{m-s}$-matrix representing the system. A sufficient condition for it to be maximal rank (and hence have as the unique solution the 0 vector) is to find a matrix $\binom{m+q}{m-s} \times \binom{m+q}{m}$-matrix $B$ such that $BA=1_{\binom{m+q}{m-s}}$ where $1_{\binom{m+q}{m-s}}$ is the $\binom{m+q}{m-s}$-dimensional identity.

    Let us denote by 
    \begin{align*}
        y^{k_1 \dots k_{m-s}l_1 \dots l_s}=\sum_{K_{m-s}\subset K_m}x^{k_1 \dots k_{m-s}}=0
    \end{align*}
    where $l_1\dots l_s \in K\setminus K_{m-s}$. Then a right inverse is given by the following system of equations
    \begin{align}\label{e:inj_leftinverse}
        x^{k_1 \dots k_{m-s}}= \sum_{i=0}^{q} \sum_{\substack{ 
        k_j \dots k_{j+i} \in \{k_1 \dots k_{m-s}\}\\
        k_m \dots k_{m+i} \in K \setminus\{k_1 \dots k_{m-s}\} }} C_i y^{k_1 \dots \widehat{k_j}\dots \widehat k_{j+i}\dots k_{m-s} l_1 \dots l_{s+i}}
    \end{align}
    where $\widehat{k_j}$ denotes that the index $k_j$ has been omitted and the coefficients $C_i$ are defined in \eqref{e:coeff_inverses}.
    It is a long but straightforward computation to show that this is actually an inverse to \eqref{e:system_equations_inj}.
    Hence we can conclude that the unique solution to \eqref{e:system_equations_inj} is $x^{k_1 \dots  k_{m-s}}=0$ for all $k_1 \dots k_{m-s} \in K$.
\end{proof}

\begin{lemma}\label{lem:solution_system_surj}
    Let $m,s,p \in \mathbb{N}$, $m,p,s \geq 1$, and define $q=m+p-1-s$. Let $K$ be a set of $m+q$ indexes. For every subset with $q$ elements $K_q$ define the following equation
    \begin{align}\label{e:system_equations_surj}
        x^{K_q}=\sum_{K_{s}\subset K \setminus K_q}y^{K_q K_s}
    \end{align}
    where the sum runs over all possible subsets $K_{s}$ of $K \setminus K_q$ with $s$ elements. The system with one such equation for each $K_q \subset K$ can be right-inverted and the inverse is given by
    \begin{align}\label{e:surj_righinverse}
        y^{k_1 \dots k_{q+s}}= \sum_{i=0}^{q+s} \sum_{\substack{ 
        k_j \dots k_{j+i} \in \{k_1 \dots k_{q+s}\}\\
        h_{1} \dots h_{i} \in K \setminus\{k_1 \dots k_{q+s}\} }} C_i x^{k_1 \dots \widehat{k_j}\dots \widehat k_{j+s+i}\dots k_{q+s} l_1 \dots l_{i}}.
    \end{align}
\end{lemma}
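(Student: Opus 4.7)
The plan is to verify by direct substitution that the formula \eqref{e:surj_righinverse} provides a right-inverse to the system \eqref{e:system_equations_surj}. Concretely, one defines $y$ via \eqref{e:surj_righinverse} in terms of arbitrary inputs $x$, substitutes this $y$ into the right-hand side of each equation in \eqref{e:system_equations_surj}, and checks that the resulting expression collapses to $x^{K_q}$ for every $q$-subset $K_q \subset K$. This mirrors the verification carried out for Lemma~\ref{lem:solution_system_inj}, with the roles of the two sides swapped.

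To organize the computation, fix $K_q = \{a_1, \dots, a_q\}$ and expand
\[
    \sum_{K_s \subset K \setminus K_q} y^{K_q \cup K_s}
\]
by plugging in \eqref{e:surj_righinverse}. Then interchange the order of summation so that, for each $q$-subset $K' \subset K$, one isolates the total coefficient of $x^{K'}$ arising in the expansion. The required identities are then exactly: the coefficient of $x^{K_q}$ equals $1$, and the coefficient of every other $x^{K'}$ vanishes.

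I would group the $q$-subsets $K'$ by the integer $i := |K_q \setminus K'| = |K' \setminus K_q|$, which measures the mismatch between $K'$ and $K_q$. The case $i = 0$, i.e.\ $K' = K_q$, reduces to the single normalization $C_0 \binom{q+s}{s} = 1$, which is precisely the defining equation for $C_0$ in \eqref{e:coeff_inverses}. For $i \geq 1$, demanding that the coefficient of $x^{K'}$ vanish produces a linear relation among $C_0, \dots, C_i$ with binomial weights coming from the combinatorial multiplicities; after rearrangement, this relation is exactly the recursion for $C_i$ in \eqref{e:coeff_inverses}. Hence the coefficients $C_i$ were tailored precisely so that all these cancellations hold automatically.

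The main obstacle is the combinatorial bookkeeping. For each pair consisting of an outer set $K_s$ and an inner choice of indices to omit from $\{k_1,\dots,k_{q+s}\}$ together with new indices to adjoin from $K \setminus \{k_1,\dots,k_{q+s}\}$ inside \eqref{e:surj_righinverse}, one must identify the resulting $q$-subset $K'$ indexing the appearing $x$, and, for each fixed $K'$, count the number of such pairs with the correct binomial factors. This is structurally parallel to the verification underlying Lemma~\ref{lem:solution_system_inj}, which the author labels as ``a long but straightforward computation''; the same description applies here, and the recursion \eqref{e:coeff_inverses} is engineered precisely to make this combinatorics close up.
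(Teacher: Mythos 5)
Your route is genuinely different from the paper's: the paper does no new combinatorics here, it observes that the matrix of the system \eqref{e:system_equations_surj} coincides with the one already treated in Lemma \ref{lem:solution_system_inj} (replace every index set by its complement in $K$, which exchanges $m$-sets with $q$-sets and $(m-s)$-sets with $(q+s)$-sets, and exchanges the roles of equations and unknowns), and then transports the inverse \eqref{e:inj_leftinverse} found there. A self-contained verification of the kind you propose could in principle replace this, but only if it is run in the correct direction, and that is where your plan breaks.

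You substitute $y$ given by \eqref{e:surj_righinverse} into the right-hand side of \eqref{e:system_equations_surj} and ask that the result collapse to $x^{K_q}$; writing the system as $x=My$ with $M$ the $\binom{m+q}{q}\times\binom{m+q}{q+s}$ inclusion matrix, you are trying to verify $MR=1$. Two problems arise. First, your $i=0$ count is wrong: $K\setminus K_q$ has $m$ elements, and for each of the $\binom{m}{s}$ choices of $K_s$ exactly one term of \eqref{e:surj_righinverse} (the $i=0$ term whose omitted indices are exactly $K_s$) reproduces $x^{K_q}$, so the diagonal coefficient is $C_0\binom{m}{s}$, not $C_0\binom{q+s}{s}$; these agree only when $p=1$, i.e. $m=q+s$. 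Second, and decisively, for $p\geq 2$ no choice of coefficients could make $MR=1$ hold: then $\binom{m+q}{q}>\binom{m+q}{q+s}$ (the inequality of Lemma \ref{lem:binomial_dimensions} is strict, since equality of distinct binomials $\binom{n}{a}=\binom{n}{b}$ forces $a+b=n$, i.e. $p=1$), so $M$ has strictly more rows than columns and admits no right inverse at all. The coefficients \eqref{e:coeff_inverses} are tailored to the opposite composition: substitute the right-hand sides of \eqref{e:system_equations_surj} into \eqref{e:surj_righinverse} and check that one recovers $y^{k_1\dots k_{q+s}}$, i.e. $RM=1$; in that direction the $i=0$ term is counted by the $\binom{q+s}{s}$ ways of deleting $s$ indices from a $(q+s)$-set, which matches $C_0$, and the $i\geq 1$ cancellations reproduce the recursion. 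This left inverse of $M$ (equivalently a right inverse of $M^{T}$, which is the matrix of the map $X\mapsto Y$ actually used in the surjectivity argument) is what the lemma must deliver; the wording ``right-inverted'' in the statement invites your reading, but under that literal reading the claim itself is false for $p\geq 2$. Your grouping-by-mismatch bookkeeping is fine in principle, but it has to be carried out on the transposed system.
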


\begin{proof}
    The matrix associated to the system \eqref{e:system_equations_surj} is the transposed of the matrix associated to the system \eqref{e:system_equations_inj}. Hence, using the proof of Lemma \ref{lem:solution_system_inj} we immediately find that the system is right invertible with inverse given by the transpose of \ref{e:inj_leftinverse} which is exactly \eqref{e:surj_righinverse}.
\end{proof}

\section{Proof of Theorem \ref{thm:Main_intro}} \label{s:proof_theorem}
\subsection{Injectivity}\label{s:injectivity_Lemmas}

We divide the proof into necessary and sufficient condition, starting from the first.

\begin{lemma}\label{lem:inj_nec_cond}
    The map $W_{s}^{l (n,k)}$ cannot be injective if 
    \begin{align}
        n+k +l > N -s. 
    \end{align}
\end{lemma}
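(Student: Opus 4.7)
The plan is to exhibit an explicit nonzero element in the kernel of $W_s^{l(n,k)}$, rather than argue by dimension count (which would also work but would require a somewhat delicate binomial product inequality). First I would dispose of the degenerate cases: if $\Omega_l^{n,k}=0$ the statement is vacuous, while if $\Omega_l^{n+s,k+s}=0$ (equivalently $n+s>N-l$ or $k+s>N$) the map is identically zero on a nontrivial space and non-injectivity is immediate. Henceforth I assume $n+s\leq N-l$ and $k+s\leq N$.

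Under these reductions the hypothesis $n+k+l>N-s$ yields (by integrality) $k\geq N-l-n-s+1\geq 1$, so I can fix a $k$-element subset $C=\{c_1<\dots<c_k\}\subset\{1,\dots,N\}$ containing the block $\{n+1,n+2,\dots,N-l-s+1\}$ (which has $N-l-n-s+1\leq k$ elements, by the hypothesis), and set
\begin{align*}
    X := (v^*)^1\wedge\dots\wedge(v^*)^n\otimes e_{c_1}\wedge\dots\wedge e_{c_k}\in\Omega_l^{n,k}.
\end{align*}
Applying formula \eqref{e:Ws_formula}, each summand of $W_s^{l(n,k)}(X)$ is labeled by an ordered $s$-tuple $(\mu_{n+1},\dots,\mu_{n+s})$ of pairwise distinct elements of $\{1,\dots,N-l\}$: antisymmetry in the $(V^l)^*$ factor forces $\mu_j\notin\{1,\dots,n\}$, and antisymmetry in the $Z$ factor forces $\mu_j\notin C$. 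By the choice of $C$ a surviving $s$-tuple would have to lie in $\{n+1,\dots,N-l\}\setminus C\subseteq\{N-l-s+2,\dots,N-l\}$, a set of size only $s-1$; no such $s$-tuple of pairwise distinct indices exists, so every summand vanishes and $W_s^{l(n,k)}(X)=0$.

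The only substantive ingredient is the combinatorial choice of $C$, which is immediate from the hypothesis; no result from Section \ref{s:technical_lemmas} is required. The main thing to be careful about is the sign and index bookkeeping in formula \eqref{e:Ws_formula}, but since we are arguing the vanishing of each summand separately, signs play no role in the conclusion.
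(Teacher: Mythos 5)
Your proposal is correct and follows essentially the same route as the paper: both exhibit a nonzero $X$ with a single nonvanishing component whose index sets cover so much of $\{1,\dots,N-l\}$ that fewer than $s$ free indices remain for $\mu_{n+1},\dots,\mu_{n+s}$ in formula \eqref{e:Ws_formula}, forcing $W_{s}^{l(n,k)}(X)=0$; your explicit choice of the block $\{n+1,\dots,N-l-s+1\}\subset C$ is just a concrete instance of the paper's covering argument. (The degenerate cases you set aside never occur under the paper's conventions $n\leq N-s-l$, $k\leq N-s$, so that preliminary step is harmless but unnecessary.)
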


\begin{proof}
     Let us begin from the case $l=0$ and $s=1$ and consider the map
    \begin{align*}
        W_{1}^{0 (n,k)}: \Omega_{0}^{n,k} \rightarrow \Omega_{0}^{n+1,k+1}.
    \end{align*}
    Suppose $n+k > N-1$. Let us now construct  an element $X\neq 0$,  $X \in \Omega_{0}^{n,k} $ such that $W_{1}^{0 (n,k)}(X)=0$. Since $n+k > N-1$, it is possible to find two set of indexes $I_k=\{c_1, \dots c_k\}$ (with $c_i\neq c_j$ for all $1\leq i,j \leq k$) and $J_n=\{\mu_1 \dots \mu_n\}$ (with $\mu_i\neq \mu_j$ for all $1\leq i,j \leq n$) such that $\{1,\dots N\}= I_k \cup J_n$. Then define $X$ such that its component $X_{J_n}^{I_k}=1$ and all the others are zero. From the formula \eqref{e:Ws_formula} for $s=1$ we see that, since $\mu_{n+1}$ is not equal to any $\mu_j$ and any $c_i$, the component $X_{J_n}^{I_k}$ does not appear and hence $W_{1}^{0 (n,k)}(X)=0.$ This implies that the map $W_{1}^{0 (n,k)}$ is not injective.

    Let us now generalize this construction to $l\geq 0$ and $s\geq 1$ and consider the map
     \begin{align*}
        W_{s}^{l (n,k)}: \Omega_{l}^{n,k} \rightarrow \Omega_{l}^{n+s,k+s}.
    \end{align*}
    As before we will construct an element $X\neq 0$, $X \in \Omega_{l}^{n,k} $ such that $W_{s}^{l (n,k)}(X)=0$. 
    Since $n+k > N-s-l$, it is possible to find two set of indexes $I_k=\{c_1, \dots c_k\}$ and $J_n=\{\mu_1 \dots \mu_n\}$ such that $\{1,\dots N+1-s-l\}= I_k \cup J_n$. Define $X$ such that its component $X_{J_n}^{I_k}=1$ and all the others are zero. We can find at most $s-1$ indexes from the set $\{1,\dots N-l\}$ which are different from every index in $I_k$ and $J_n$.\footnote{Note that by construction of the standard basis in codimension $l$ the components of $X$ and $e$ cannot have the indexes $N-l+1 \dots N$ as a coordinate index (i.e. in the lower row).}

    Since in formula \eqref{e:Ws_formula} the only summands appearing have $s$ indexes $\mu_{n+1}\dots \mu_{n+s}$ not equal to any $\mu_j$ and any $c_i$, the component $X_{J_n}^{I_k}$ does not appear and hence $W_{s}^{l (n,k)}(X)=0.$ This implies that the map $W_{s}^{l (n,k)}$ is not injective.
\end{proof}

As a warm up for the proof of the sufficient condition, let us prove the following lemma.
\begin{lemma}
    Let $n+k=N-1$, $l=0$, $s=1$. Then the maps $W_{1}^{0 (n,k)}$ are injective.
\end{lemma}
\begin{proof}
    Let $X \in \Omega_{0}^{n,k} $ such that $Y= W_{1}^{0 (n,k)}(X)=0$. We have to prove that $X=0$. Fixed a coordinate system on $M$, using the standard basis, the equation $Y=0$ will become a system of $\binom{N}{n+1}\binom{N}{k+1}$ equations, one for each component of $Y$.
    
    By Lemma \ref{lem:inj_nec_cond} every component appear in the system.
    In particular, since $n+k<N$, for every set of indexes $I_k=\{c_1, \dots c_k\}$ and $J_n=\{\mu_1 \dots \mu_n\}$  there exists a index $1\leq b\leq N$ such that $b \notin \{\mu_1, \dots \mu_n\} \cup \{c_1, \dots c_k\}$, hence the component $X_{J_n}^{I_k}$ appears at least in the equation $$Y_{J_n b}^{I_k b}=0.$$

    If $X_{J_n}^{I_k}$ is the only component appearing in $Y_{J_n b}^{I_k b}=0$, then we immediately get $X_{J_n}^{I_k}=0$. 

    Otherwise let us suppose that exactly $m$ components appear in the equation $Y_{J_{n+1}}^{I_{k+1}}=0$. By construction, this means that $J_{n+1} \cap I_{k+1}$ has exactly $m$ elements $b_1 \dots b_m$. We then have that $J_{n+1} \cup I_{k+1}$ has $n+k+2-m= N+1-m$ elements. Hence there exist $m-1$ indexes, $f_1 \dots f_{m-1} \notin J_{n+1} \cup I_{k+1}$. 
        
    Denoting $J_{n+1}= J_{n+1-m} \cup \{b_1 \dots b_m\}$ and $I_{k+1}= I_{k+1-m} \cup \{b_1 \dots b_m\}$, we then consider the equations corresponding to the $\binom{2m-1}{m}$ components $Y_{J_{n+1-m}K_m}^{I_{k+1-m}K_m}$, one for each possible $m$-element subset $K_m$ of $\{f_1 \dots f_{m-1}, b_1 \dots b_m\}$. This system  has $\binom{2m-1}{m}$ equations for $\binom{2m-1}{m-1}$ variables $X_{J_{n+1-m}K_{m-1}}^{I_{k+1-m}K_{m-1}}$, one for each possible $(m-1)$-element subset $K_{m-1}$ of $\{f_1 \dots f_{m-1}, b_1 \dots b_m\}$. If we denote by $k_i$ the elements of a chosen $K_m$, every equation will have the form
    \begin{align*}
        \sum_{i=1}^{m} X_{J_{n+1-m}k_1 \dots \widehat{k_i}\dots k_m}^{I_{k+1-m}k_1 \dots \widehat{k_i}\dots k_m}=0
    \end{align*}
    where $\widehat{k_i}$ denotes that the index has been removed.
    By Lemma \ref{lem:solution_system_inj}, this system of equations has solution
    $X_{J_{n+1-m}k_1 \dots \widehat{k_i}\dots k_m}^{I_{k+1-m}k_1 \dots \widehat{k_i}\dots k_m}=0$ for every choice $K_{m-1}$ and every $i$.
    In particular we get $X_{J_n}^{I_k}=0$.
\end{proof}

\begin{remark}
    As an example we can consider the case when exactly two components $X_{J_n}^{I_k}$ and $X_{J'_n}^{I'_k}$ appear in the equation $Y_{J_n b}^{I_k b}=0$. By construction, this means that $J_n \cap I_k$ has one element $b'$. Without loss of generality we can assume $c_k=b'= \mu_n$. Call $J_{n-1}=\{\mu_1 \dots \mu_{n-1}\}$ and $I_{k-1}=\{c_1, \dots c_{k-1}\}$. $J_{n-1} \cup I_{k-1}$ has $n+k-2=N-3$ elements, since $n+k=N-1$. Hence, there exists an index $1\leq f \leq N$ such that $f \neq b'$, $f \neq b$, $f \notin J_{n-1} \cup I_{k-1}$.

        Then we can then consider the equations corresponding to the components $Y_{J_{n-1} b b'}^{I_{k-1} b b'}$, $Y_{J_{n-1} b f}^{I_{k-1} b f}$, $Y_{J_{n-1} b' f}^{I_{k-1} b' f}$. The corresponding equations  read:
        \begin{align*}
            Y_{J_{n-1} b b'}^{I_{k-1} b b'}=X_{J_{n-1} b}^{I_{k-1} b} + X_{J_{n-1} b'}^{I_{k-1} b'}=0 \\
            Y_{J_{n-1} b f}^{I_{k-1} b f}=X_{J_{n-1} b}^{I_{k-1} b} + X_{J_{n-1} f}^{I_{k-1} f}=0 \\
            Y_{J_{n-1} b' f}^{I_{k-1} b' f}=X_{J_{n-1} b'}^{I_{k-1} b'} + X_{J_{n-1} f}^{I_{k-1} f}=0. 
        \end{align*}
        In this simple case, it is easy to see that the only possible solution to this system is $X_{J_{n-1} b}^{I_{k-1} b}=X_{J_{n-1} b'}^{I_{k-1} b'}=X_{J_{n-1} f}^{I_{k-1} f}=0$. Otherwise, we can use the general formula \eqref{e:inj_leftinverse} with $m=2$, $p=s=1$ $q=1$.
        Using this numbers we get $C_0=\frac{1}{2}$ and $C_1= -\frac{1}{2}$ and finally
        \begin{align*}
            X_{J_{n-1} b}^{I_{k-1} b} = \frac{1}{2} \left(Y_{J_{n-1} b b'}^{I_{k-1} b b'} + Y_{J_{n-1} b f}^{I_{k-1} b f}- Y_{J_{n-1} b' f}^{I_{k-1} b' f}\right)
        \end{align*}
        and similarly for the other components.
\end{remark}

The proof of the general theorem is done by generalizing this construction for any $n+k< N$, any $l\geq 0$ and any $s\geq 1$.

We first generalize this construction for any $n+k< N$ with the following lemma.
\begin{lemma}
    Let $n+k=N-p$, $p\geq 1$ $l=0$, $s=1$. Then the maps $W_{1}^{0 (n,k)}$ are injective.
\end{lemma}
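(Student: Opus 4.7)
The plan is to mimic the argument of the previous lemma almost verbatim, exploiting the fact that Lemma \ref{lem:solution_system_inj}, the combinatorial and linear-algebraic tool, is already formulated for arbitrary $p \geq 1$. Let $X \in \Omega_0^{n,k}$ with $Y := W_1^{0(n,k)}(X) = 0$; my goal is to show that every component $X_{J_n}^{I_k}$ vanishes. Since $n+k = N-p < N$, there is at least one index $b \in \{1,\dots,N\}$ that avoids $J_n \cup I_k$, so the component $X_{J_n}^{I_k}$ appears in the equation $Y_{J_n \cup \{b\}}^{I_k \cup \{b\}} = 0$ extracted from $Y=0$.

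Next I would set $J_{n+1} := J_n \cup \{b\}$, $I_{k+1} := I_k \cup \{b\}$, and define $m := |J_{n+1} \cap I_{k+1}| \geq 1$. As in the $p=1$ case, inspection of \eqref{e:Ws_formula} shows that exactly $m$ components of $X$ contribute to the equation, one for each common index. If $m=1$ the equation collapses to $X_{J_n}^{I_k} = 0$ immediately. If $m \geq 2$, I write the common indices as $b_1,\dots,b_m$ and the remaining parts as $J_{n+1-m}$ and $I_{k+1-m}$. A counting check gives $|J_{n+1} \cup I_{k+1}| = (n+k+2)-m = N+2-p-m$, so there are $p+m-2$ further indices $f_1,\dots,f_{p+m-2}$ in $\{1,\dots,N\}$. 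Pooling them with the $b_i$ produces a reservoir $K$ of size $2m+p-2$.

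I would then let $K_m$ range over all $m$-element subsets of $K$ and collect the equations $Y_{J_{n+1-m} K_m}^{I_{k+1-m} K_m} = 0$. For each such $K_m$ the common indices of the upper and lower rows are precisely the elements of $K_m$, so the components appearing are the $X_{J_{n+1-m} K_{m-1}}^{I_{k+1-m} K_{m-1}}$ with $K_{m-1} \subset K_m$, and the resulting linear system is exactly the one in Lemma \ref{lem:solution_system_inj} with parameters $m$, the same $p$, and $s = 1$ (so $q = m+p-2$ and $|K| = m+q$, matching the hypothesis of that lemma). Applying the lemma forces all these variables to vanish; taking $K_{m-1} = \{b_1,\dots,b_m\} \setminus \{b\}$ gives $X_{J_n}^{I_k} = 0$, and since $J_n$ and $I_k$ were arbitrary, $X=0$.

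I do not anticipate a serious obstacle. The step from $p=1$ to general $p$ is absorbed entirely by the increased number of free indices in the reservoir, which Lemma \ref{lem:solution_system_inj} is already designed to handle. The only technical nuisance will be the bookkeeping of signs arising from reordering wedge products in \eqref{e:Ws_formula}, but those can be absorbed into the unknowns without affecting the invertibility of the subsystem, just as in the $p=1$ proof.
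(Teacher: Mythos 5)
Your proposal is correct and follows essentially the same route as the paper's proof: isolate the $m$ shared indices $b_1,\dots,b_m$, adjoin the $q=m+p-2$ free indices $f_1,\dots,f_q$ to form the reservoir $K$ of size $m+q$, consider the equations $Y_{J_{n+1-m}K_m}^{I_{k+1-m}K_m}=0$ over all $m$-element subsets $K_m\subset K$, and invoke Lemma \ref{lem:solution_system_inj} with $s=1$ to conclude $X_{J_n}^{I_k}=0$. The explicit treatment of the $m=1$ case and the remark about absorbing signs are harmless additions at the same level of rigor as the paper.
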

\begin{proof}
    We proceed as before. Let $X \in \Omega_{0}^{n,k} $ such that $Y= W_{1}^{0 (n,k)}(X)=0$, which using the standard basis is a system of equations. 
    
    Let us now suppose that exactly $m$ components appear in the equation $Y_{J_{n+1}}^{I_{k+1}}=0$. By construction, this means that $J_{n+1} \cap I_{k+1}$ has exactly $m$ elements $b_1 \dots b_m$. We then have that $J_{n+1} \cup I_{k+1}$ has $n+k+2-m= N-p+2-m$ elements. Hence there exist $q=m+p-2$ indexes, $f_1 \dots f_{q} \notin J_{n+1} \cup I_{k+1}$. 
        
        Denoting $J_{n+1}= J_{n+1-m} \cup \{b_1 \dots b_m\}$ and $I_{k+1}= I_{k+1-m} \cup \{b_1 \dots b_m\}$, we then consider the equations corresponding to the $\binom{m+q}{m}$ components $Y_{J_{n+1-m}K_m}^{I_{k+1-m}K_m}$, one for each possible $m$-element subset $K_m$ of $\{f_1 \dots f_{q}, b_1 \dots b_m\}$. This system  has $\binom{m+q}{m}$ equations for $\binom{m+q}{m-1}$ variables $X_{J_{n+1-m}K_{m-1}}^{I_{k+1-m}K_{m-1}}$, one for each possible $(m-1)$-element subset $K_{m-1}$ of $\{f_1 \dots f_{q}, b_1 \dots b_m\}$. If we denote by $k_i$ the elements of a chosen $K_m$, every equation will have the form
        \begin{align*}
            \sum_{i=1}^{m} X_{J_{n+1-m}k_1 \dots \widehat{k_i}\dots k_m}^{I_{k+1-m}k_1 \dots \widehat{k_i}\dots k_m}=0
        \end{align*}
        where $\widehat{k_i}$ denotes that the index has been removed.
        By Lemma \ref{lem:solution_system_inj}, this system of equations has solution
        $X_{J_{n+1-m}k_1 \dots \widehat{k_i}\dots k_m}^{I_{k+1-m}k_1 \dots \widehat{k_i}\dots k_m}=0$ for every choice $K_{m-1}$ and every $i$.
        In particular we get $X_{J_n}^{I_k}=0$.

\end{proof}

The same proof holds for $l>0$ with $n+k=N-l-p$ and choosing $q=m+p-2$. Finally let us $s\geq 1$. We have the following lemma.
\begin{lemma}
    Let $n+k=N-l-s-p+1$, $p\geq 1$. Then the maps $W_{s}^{l (n,k)}$ are injective.
\end{lemma}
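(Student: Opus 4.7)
The plan is to follow the exact same strategy as the previous two lemmas, now handling arbitrary $l \geq 0$ and $s \geq 1$ simultaneously. Suppose $X \in \Omega_l^{n,k}$ satisfies $Y = W_s^{l(n,k)}(X) = 0$, fix an arbitrary pair $(J_n, I_k)$, and aim to show $X_{J_n}^{I_k} = 0$.

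First I would identify which equations $Y_{J_{n+s}}^{I_{k+s}} = 0$ involve the variable $X_{J_n}^{I_k}$. From formula \eqref{e:Ws_formula}, $X_{J_n}^{I_k}$ contributes to $Y_{J_{n+s}}^{I_{k+s}}$ precisely when $J_{n+s} = J_n \sqcup T$ and $I_{k+s} = I_k \sqcup T$ for some $s$-element $T \subset \{1, \ldots, N-l\}$ disjoint from $J_n \cup I_k$. More generally, for any fixed $(J_{n+s}, I_{k+s})$ with $|J_{n+s} \cap I_{k+s}| = m$, the equation $Y_{J_{n+s}}^{I_{k+s}} = 0$ involves exactly $\binom{m}{s}$ unknowns $X_{J_{n+s}\setminus T}^{I_{k+s}\setminus T}$, one per $s$-subset $T \subset J_{n+s} \cap I_{k+s}$.

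Next I would assemble the relevant subsystem, mirroring the two preceding proofs. Let $C = J_n \cap I_k$ with $c = |C|$, decompose $J_n = J_n' \sqcup C$, $I_k = I_k' \sqcup C$ (so $J_n'$ and $I_k'$ are disjoint), and set $m := c+s$ and $q := m+p-1-s = c+p-1$. Choose a set $F \subset \{1, \ldots, N-l\} \setminus (J_n \cup I_k)$ of size $q+s$ and consider the equations
\begin{align*}
    Y_{J_n' \cup K_m}^{I_k' \cup K_m} = 0
\end{align*}
as $K_m$ ranges over the $m$-element subsets of the swing set $C \cup F$ of cardinality $m+q$. Unwinding the definitions, the unknowns appearing in this subsystem are exactly the components $X_{J_n' \cup K_{m-s}}^{I_k' \cup K_{m-s}}$ indexed by the $(m-s)$-element subsets of $C \cup F$, and the incidence pattern is precisely the one treated in Lemma \ref{lem:solution_system_inj} with parameters $m, p, s, q$. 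That lemma forces every such component to vanish; in particular $X_{J_n}^{I_k} = X_{J_n' \cup C}^{I_k' \cup C} = 0$, and since $(J_n, I_k)$ was arbitrary, $X = 0$.

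The only genuine thing to verify is the free-index budget: the pool $\{1, \ldots, N-l\} \setminus (J_n \cup I_k)$ has size $N - l - n - |I_k \cap \{1,\ldots,N-l\}| + c \geq N - l - n - k + c = (s+p-1) + c = q+s$ by the hypothesis $n+k = N-l-s-p+1$, so a valid $F$ always exists; this is exactly where the codimension-$l$ restriction enters. The signs coming from the wedge-product antisymmetry and the global $(-1)^{ns}$ factor in \eqref{e:Ws_formula} can be absorbed into the variables of Lemma \ref{lem:solution_system_inj} without changing its invertibility conclusion. Aside from these bookkeeping points the argument is a direct transcription of the warm-up lemmas, so I expect the main obstacle (already cleared) to have been the matrix inversion carried out in Lemma \ref{lem:solution_system_inj}.
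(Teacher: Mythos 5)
Your proposal is correct and follows essentially the same route as the paper: isolate the subsystem of equations indexed by $m$-subsets of a swing set of $m+q$ admissible indexes and apply Lemma \ref{lem:solution_system_inj} to conclude that all the components, in particular $X_{J_n}^{I_k}$, vanish. The only presentational difference is that you organize the argument around a fixed variable (taking $m=|J_n\cap I_k|+s$) rather than a fixed equation, and you spell out the free-index count and sign bookkeeping a bit more explicitly than the paper does.
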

\begin{proof}
    We proceed as before. Let $X \in \Omega_{l}^{n,k} $ such that $Y= W_{s}^{l (n,k)}(X)=0$. Fixed a coordinate system on $M$, using the standard basis, the equation $Y=0$ will become a system of $\binom{N-l}{n+s}\binom{N}{k+s}$ equations, one for each component of $Y$.
    
    Since $n+k<N-l-s$, for every set of indexes $I_k=\{c_1, \dots c_k\}$ and $J_n=\{\mu_1 \dots \mu_n\}$  there exist $s$ indexes $1\leq b_1 \dots b_s \leq N-l$ such that $b_1 \dots b_s \notin \{\mu_1, \dots \mu_n\} \cup \{c_1, \dots c_k\}$, hence the component $X_{J_n}^{I_k}$ appears at least in the equation $$Y_{J_n b_1 \dots b_s}^{I_k b_1 \dots b_s}=0.$$

    Let us now suppose that exactly $\binom{m}{s}$ components\footnote{This is the most general case. By construction it is not possible to have equations with a different number of terms.} appear in the equation $Y_{J_{n+s}}^{I_{k+s}}=0$. By construction, this means that $J_{n+s} \cap I_{k+s}$ has exactly $m$ elements $b_1 \dots b_{m}$. We then have that $J_{n+s} \cup I_{k+s}$ has $n+k+2s-m= N-l-s-p+1+2s-m+=N-l-p-m+1+s$ elements. Hence there exist $q=m+p-1-s$ indexes, $f_1 \dots f_{q} \notin J_{n+s} \cup I_{k+s}$. 
        
        Denoting $J_{n+s}= J_{n+s-m} \cup \{b_1 \dots b_m\}$ and $I_{k+s}= I_{k+s-m} \cup \{b_1 \dots b_m\}$, we then consider the equations corresponding to the $\binom{m+q}{m}$ components $Y_{J_{n+s-m}K_m}^{I_{k+s-m}K_m}$, one for each possible $m$-element subset $K_m$ of $\{f_1 \dots f_{q}, b_1 \dots b_m\}$. This system  has $\binom{m+q}{m}$ equations for $\binom{m+q}{m-s}$ variables $X_{J_{n+s-m}K_{m-s}}^{I_{k+s-m}K_{m-s}}$, one for each possible $(m-s)$-element subset $K_{m-s}$ of $\{f_1 \dots f_{q}, b_1 \dots b_m\}$.\footnote{Note that by Lemma \ref{lem:binomial_dimensions} $\binom{m+q}{m}\geq \binom{m+q}{m-s}$ for all $m$ and $q=m+p-1-s$, $p\geq 1$.} If we denote by $k_i$ the elements of a chosen $K_m$, every equation will have the form
        \begin{align*}
            \sum_{i=1}^{m} X_{J_{n+1-m}k_1 \dots \widehat{k_i}\dots k_m}^{I_{k+1-m}k_1 \dots \widehat{k_i}\dots k_m}=0
        \end{align*}
        where $\widehat{k_i}$ denotes that the index has been removed.
        By Lemma \ref{lem:solution_system_inj}, this system of equations has solution
        $X_{J_{n+1-m}k_1 \dots \widehat{k_i}\dots k_m}^{I_{k+1-m}k_1 \dots \widehat{k_i}\dots k_m}=0$ for every choice $K_{m-1}$ and every $i$.
        In particular we get $X_{J_n}^{I_k}=0$.

\end{proof}

\subsection{Surjectivity}\label{s:surjectivity_lemmas}
For surjectivity we proceed analogously as for injectivity, however we will omit some of the easier cases. Let us start from the necessary condition.
\begin{lemma}
    The map $W_{s}^{l (n,k)}$ cannot be surjective if 
    \begin{align}
        n+k < N - s. 
    \end{align}
\end{lemma}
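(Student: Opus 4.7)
The plan is dual to the proof of the necessary condition for injectivity (Lemma \ref{lem:inj_nec_cond}): rather than exhibiting a nonzero element that the map kills, I would exhibit a target element that cannot be produced. The idea is to identify a component $Y_J^I$ in the codomain $\Omega_l^{n+s,k+s}$ such that $W_s^{l (n,k)}(X)_J^I$ vanishes identically in $X$; then the element $Y\in\Omega_l^{n+s,k+s}$ with this single component equal to $1$ and all others equal to $0$ is automatically outside the image.

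The key structural observation comes straight from formula \eqref{e:Ws_formula}: the $s$ wedged copies of $e$ force the same $s$ indices $\mu_{n+1},\dots,\mu_{n+s}$ to appear simultaneously in the upper row (through $e_{\mu_{n+j}}$) and in the lower row (through $(v^*)^{\mu_{n+j}}$) of every summand. Consequently, every component of $W_s^{l (n,k)}(X)$ indexed by sets $I,J$ with $|J\cap I|<s$ vanishes identically in $X$, because no summand on the right-hand side of \eqref{e:Ws_formula} can contribute to such a component.

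Thus the proof reduces to constructing subsets $J\subseteq\{1,\dots,N-l\}$ of size $n+s$ and $I\subseteq\{1,\dots,N\}$ of size $k+s$ with $|J\cap I|<s$. The minimum achievable intersection for such $J$ and $I$ is $\max(0,\,n+k+2s-N)$, and the hypothesis $n+k<N-s$ rewrites precisely as $n+k+2s-N\leq s-1$, so this minimum is strictly less than $s$. Explicitly, one may take $J=\{1,\dots,n+s\}$ together with $I=\{n+s+1,\dots,n+k+2s\}$ when $n+k+2s\leq N$, and $I=\{1,\dots,n+k+2s-N\}\cup\{n+s+1,\dots,N\}$ otherwise. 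The inclusion $J\subseteq\{1,\dots,N-l\}$ is guaranteed by the standing constraint $n\leq N-s-l$ built into the definition of $W_s^{l (n,k)}$.

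No real obstacle is anticipated: the whole argument is a short counting check followed by the one-line structural observation, entirely parallel in spirit to Lemma \ref{lem:inj_nec_cond}. The only bookkeeping subtlety is the case $l>0$, but this is in fact easier rather than harder, because $I$ is free to use indices in $\{N-l+1,\dots,N\}$ which are automatically disjoint from any admissible $J$, so extra room is available to keep $|J\cap I|$ small.
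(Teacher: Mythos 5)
Your proposal is correct and follows essentially the same route as the paper's proof: both exhibit a target component $Y_{J}^{I}$ with $|J\cap I|<s$ and observe from \eqref{e:Ws_formula} that every component in the image has at least $s$ indices shared between the two rows, so such a $Y$ lies outside the image. Your version merely makes explicit the counting ($\min|J\cap I|=\max(0,n+k+2s-N)\leq s-1$) and the concrete choice of index sets, which the paper leaves implicit.
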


\begin{proof}
Consider the map
     \begin{align*}
        W_{s}^{l (n,k)}: \Omega_{l}^{n,k} \rightarrow \Omega_{l}^{n+s,k+s}.
    \end{align*}
    We want to show that there exists an element $Y \in \Omega_{l}^{n+s,k+s}$ which is not in the image of $W_{s}^{l (n,k)}$. It is sufficient to show that there exists a component which cannot be generated from any $X \in \Omega_{l}^{n,k} $.

    Since $n+k < N-s$, it is possible to find two set of indexes $I_{k+s}=\{c_1, \dots c_{k+s}\}$ and $J_{n+s}=\{\mu_1 \dots \mu_{n+s}\}$ such that $ I_k \cap J_n$ has strictly less than $s$ elements. Define $Y$ such that its component $Y_{J_{n+s}}^{I_{k+s}}=1$ and all the others are zero.
    
    Since in formula \eqref{e:Ws_formula} all the components in the image have $s$ index shared in the two rows\footnote{If we consider the components of the image the indexes on the upper row are those corresponding to the indexes of the $e$'s, while the ones in the lower row are those corresponding to the $dx$'s.} Hence $Y$ is not an image of any $X$. This implies that the map $W_{s}^{l (n,k)}$ is not surjective.
\end{proof}

\begin{lemma}
    Let $n+k=N-1$, $l=0$, $s=1$. Then the maps $W_{1}^{0 (n,k)}$ are injective.
\end{lemma}
\begin{proof}
    Let $Y \in \Omega_{0}^{n+1,k+1}$. We want to prove that there exists an $X\in \Omega_{0}^{n,k}$ such that $Y= W_{1}^{0 (n,k)}(X)$. Fixed a coordinate system on $M$, using the standard basis, the equation $Y=W_{1}^{0 (n,k)}(X)$ will become a system of $\binom{N}{n+1}\binom{N}{k+1}$ equations, one for each component of $Y$.
    
    First we prove that all the components of $Y$ appear in this system. Since $n+1+k+1>N$, for every set of indexes $I_{k+1}=\{c_1, \dots c_{k+1}\}$ and $J_{n+1}=\{\mu_1 \dots \mu_{n+1}\}$  there exists a index $1\leq b\leq N$ such that $b \in \{\mu_1, \dots \mu_{n+1}\} \cap \{c_1, \dots c_{k+1}\}$, hence the component $Y_{J_{n+1}}^{I_{k+1}}$ appears at least as image of $X_{J_n}^{I_k}$ where $J_n = J_{n+1}\setminus \{b\}$ and $I_k = I_{k+1}\setminus \{b\}$.

    Our goal is to prove that there exists a combination of components of $X$ that \emph{generates} this component of $Y$.\footnote{Here by generate we mean that if we construct an element $X\in \Omega_{0}^{n,k}$ with just these non-zero components, then under the map $W_{1}^{0 (n,k)}(X)$ we get the desired component.} We have now various possibilities.

    \begin{itemize}
        \item If the component $X_{J_n}^{I_k}$ is only generating the aforementioned component $Y_{J_{n+1}}^{I_{k+1}}$, then we can choose $X$ to have just this component and conclude for surjectivity. 
        


        \item  Suppose now that the component $X_{J_n}^{I_k}$ is generating exactly $m$ components of $Y$ one of which is $Y_{J_{n+1}}^{I_{k+1}}$.
        By construction, this means that there are exactly $m$ indexes $b_1 \dots b_m \notin J_{n} \cup I_{k}$. Since $n+k=N-1$, this implies that $J_{n} \cap I_{k}$ has $q=m-1$ elements $f_1\dots f_{q}$.

        Denoting $J_{n}= J_{n-q} \cup \{f_1 \dots f_q\}$ and $I_{k}= I_{k-q}\cup \{f_1 \dots f_q\}$, we then consider the  $\binom{m+q}{q}$ components $X_{J_{n-q}k_1 \dots k_{q}}^{I_{k-q}k_1 \dots k_{q}}$ one for each possible $q$-element subset $K_{q}$ of $K=\{f_1 \dots f_{q}, b_1 \dots b_m\}$. Each of these components we consider, with a slight abuse of notation the following equations:
        \begin{align*}
            W_{1}^{0 (n,k)}(X_{J_{n-q}k_1 \dots k_{q}}^{I_{k-q}k_1 \dots k_{q}})= \sum_{k_{q+1} \in K\setminus K_{q}} Y_{J_{n-q}k_1 \dots k_{q}k_{q+1}}^{I_{k-q}k_1 \dots k_{q}k_{q+1}}.
        \end{align*}
        The components in the image are $\binom{m+q}{q+1}$.

        By Lemma \ref{lem:solution_system_surj} it is possible to invert this system and get 
        \begin{align*}
            Y_{J_{n-q} k_1 \dots k_{q+1}}^{I_{k-q} k_1 \dots k_{q+1}}= \sum_{i=0}^{q+1} \sum_{\substack{ k_j \dots k_{j+i} \in \{k_1 \dots k_{q+1}\}\\ h_{1} \dots h_{i} \in K \setminus\{k_1 \dots k_{q+1}\} }} C_i X_{J_{n-q}k_1 \dots \widehat{k_j}\dots \widehat k_{j+1+i}\dots k_{q+1} l_1 \dots l_{i}}^{I_{k-q}k_1 \dots \widehat{k_j}\dots \widehat k_{j+1+i}\dots k_{q+1} l_1 \dots l_{i}}.
        \end{align*}
    \end{itemize}
    Hence, for each component $Y_{J_{n-q} k_1 \dots k_{q+1}}^{I_{k-q} k_1 \dots k_{q+1}}$ we can get a linear combination of components of $X$ that generates exactly it, proving that the map is surjective.
\end{proof}

Let us now generalize this proof to the most general case.

\begin{lemma}
    Let $n+k=N-s+p-1$, $p\geq 1$. Then the maps $W_{s}^{l (n,k)}$ are surjective.
\end{lemma}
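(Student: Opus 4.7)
The plan is to mimic the preceding surjectivity lemma and reduce the problem, component by component in the standard basis, to the inversion afforded by Lemma~\ref{lem:solution_system_surj}. Given $Y \in \Omega_l^{n+s,k+s}$, the goal is to produce $X \in \Omega_l^{n,k}$ with $W_s^{l (n,k)}(X) = Y$, and it suffices to treat each component $Y_{J_{n+s}}^{I_{k+s}}$ of the target separately and assemble the resulting pieces by linearity.

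First, I would observe that the counting hypothesis $n+k = N-s+p-1$, combined with $J_{n+s}\cup I_{k+s}\subseteq\{1,\dots,N\}$, forces the overlap $m := |J_{n+s}\cap I_{k+s}|$ to satisfy $m \geq (n+k+2s) - N = s+p-1\geq s$. Hence each target admits $\binom{m}{s}$ contributing components of the form $X_{J_{n+s}\setminus B_s}^{I_{k+s}\setminus B_s}$, one for every $s$-subset $B_s\subset J_{n+s}\cap I_{k+s}$. Writing $J_{n+s} = \tilde J\cup B$ and $I_{k+s} = \tilde I\cup B$ with $B = J_{n+s}\cap I_{k+s}$, I would then adjoin free indices $f_1,\dots,f_q\in\{1,\dots,N-l\}\setminus(J_{n+s}\cup I_{k+s})$, form $K=B\cup\{f_1,\dots,f_q\}$ of size $m+q$, and consider the subsystem of variables $X_{\tilde J\cup K_q}^{\tilde I\cup K_q}$ indexed by $q$-subsets $K_q\subset K$, together with the targets $Y_{\tilde J\cup K_{q+s}}^{\tilde I\cup K_{q+s}}$ indexed by $(q+s)$-subsets $K_{q+s}\subset K$.

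Choosing $q$ so that $\{f_1,\dots,f_q\}$ exhausts $\{1,\dots,N-l\}\setminus(J_{n+s}\cup I_{k+s})$, the image $W_s^{l (n,k)}(X_{\tilde J\cup K_q}^{\tilde I\cup K_q})$ equals $\sum_{K_s\subset K\setminus K_q}(-1)^{ns}\,Y_{\tilde J\cup K_q\cup K_s}^{\tilde I\cup K_q\cup K_s}$, so the subsystem is closed under $W_s^{l (n,k)}$ and has exactly the shape handled by Lemma~\ref{lem:solution_system_surj}. Inverting it expresses each $Y_{\tilde J\cup K_{q+s}}^{\tilde I\cup K_{q+s}}$ in the subsystem --- in particular the target $Y_{J_{n+s}}^{I_{k+s}}$, corresponding to $K_{q+s}=B$ --- as an explicit linear combination of the variables, and summing these linear combinations over all targets produces the required $X$.

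The main obstacle is the closure step: since $I_{k+s}$ may contain some number $h$ of indices in the ``high'' range $\{N-l+1,\dots,N\}$ which do not constrain any low index, the number of free low indices outside $J_{n+s}\cup I_{k+s}$ depends on the precise high/low profile of $I_{k+s}$, and the parameter $q$ of the subsystem must be selected in a way that matches the hypotheses of Lemma~\ref{lem:solution_system_surj} while still keeping the subsystem closed. Once this bookkeeping is carried out uniformly in $m$, $h$, $l$, $s$, and $p$, the inversion step is automatic and the remainder of the argument is exactly parallel to the preceding $l=0,\,s=1$ case.
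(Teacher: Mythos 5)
Your overall strategy is the same as the paper's: work component by component in the standard basis, collect the relevant components into the closed subsystem built on $K=(J_{n+s}\cap I_{k+s})\cup\{\text{low indices outside } J_{n+s}\cup I_{k+s}\}$, and apply the right inverse of Lemma~\ref{lem:solution_system_surj} (the coefficients $C_i$) to express each target component as a linear combination of $X$-components that generates exactly it; the closure of the subsystem, which you rightly emphasize, is what makes ``exactly'' legitimate, and assembling by linearity then gives a preimage.

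The gap is the step you defer at the end: the ``bookkeeping'' you postpone is precisely the remaining nontrivial verification, and as written your subsystem is mis-parameterized. Set $m=|J_{n+s}\cap I_{k+s}|$ and let $q$ be the number of low indices outside $J_{n+s}\cup I_{k+s}$. Form-degree counting forces the subsystem variables to be indexed by $(m-s)$-subsets $A\subset K$ (so that $X_{\tilde J\cup A}^{\tilde I\cup A}$ actually lies in $\Omega_l^{n,k}$) and the targets by $m$-subsets, the original target corresponding to the subset $B=J_{n+s}\cap I_{k+s}$ itself. Your indexing of variables by $q$-subsets and targets by $(q+s)$-subsets agrees with this only when $m-s=q$; writing $h$ for the number of indices of $I_{k+s}$ lying in $\{N-l+1,\dots,N\}$, one has $m-s=q+(p-1)+(l-h)\geq q$, with equality iff $p=1$ and $h=l$, so for $p>1$ (or $h<l$) the displayed components $X_{\tilde J\cup K_q}^{\tilde I\cup K_q}$ are not even elements of $\Omega_l^{n,k}$. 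The fix is a one-line parameter match rather than a new idea: take equations indexed by $(m-s)$-subsets and unknowns by $m$-subsets, and apply Lemma~\ref{lem:solution_system_surj} with lemma-parameters $q_{\mathrm L}=m-s$, $m_{\mathrm L}=q+s$, $p_{\mathrm L}=p+l-h$, which satisfies $p_{\mathrm L}\geq 1$ since $h\leq l$ and $p\geq 1$; then the inversion formula applies verbatim and your argument coincides with the paper's (which, for what it is worth, also glosses over the high/low index count by tacitly treating all indices outside $J_n\cup I_k$ as available for the wedge with $e$).
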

\begin{proof}
    Let $Y \in \Omega_{l}^{n+s,k+s}$. We want to prove that there exists an $X\in \Omega_{l}^{n,k}$ such that $Y= W_{s}^{l (n,k)}(X)$. Fixed a coordinate system on $M$, using the standard basis, the equation $Y=W_{s}^{l (n,k)}(X)$ will become a system of $\binom{N}{n+s}\binom{N}{k+s}$ equations, one for each component of $Y$.
    
    First we prove that all the components of $Y$ appear in this system. Since $n+s+k+s=N-s+p-1+2s=N+s+p-1>N+s-1$, for every set of indexes $I_{k+s}=\{c_1, \dots c_{k+s}\}$ and $J_{n+s}=\{\mu_1 \dots \mu_{n+s}\}$  there exist $s$ indexes $1\leq b_1 \dots b_s\leq N$ such that $b_1 \dots b_s\in \{\mu_1, \dots \mu_{n+s}\} \cap \{c_1, \dots c_{k+s}\}$, hence the component $Y_{J_{n+s}}^{I_{k+s}}$ appears at least as image of $X_{J_n}^{I_k}$ where $J_n = J_{n+s}\setminus \{b_1 \dots b_s\}$ and $I_k = I_{k+s}\setminus \{b_1 \dots b_s\}$.

    Our goal is to prove that there exists a combination of components of $X$ that \emph{generates} this component of $Y$.\footnote{Here by generate we mean that if we construct an element $X\in \Omega_{l}^{n,k}$ with just these non-zero components, then under the map $W_{s}^{l (n,k)}(X)$ we get the desired component.} We have now various possibilities.

    \begin{itemize}
        \item If the component $X_{J_n}^{I_k}$ is only generating the aforementioned component $Y_{J_{n+s}}^{I_{k+s}}$, then we can choose $X$ to have just this component and conclude for surjectivity. 
        


        \item  Suppose now that the component $X_{J_n}^{I_k}$ is generating exactly $\binom{m}{s}$ components of $Y$ one of which is $Y_{J_{n+s}}^{I_{k+s}}$.
        By construction, this means that there are exactly $m$ indexes $b_1 \dots b_m \notin J_{n} \cup I_{k}$. Since $n+k=N-s+p-1$, this implies that $J_{n} \cap I_{k}$ has $q=m-1-s+p$ elements $f_1\dots f_{q}$.

        Denoting $J_{n}= J_{n-q} \cup \{f_1 \dots f_q\}$ and $I_{k}= I_{k-q}\cup \{f_1 \dots f_q\}$, we then consider the  $\binom{m+q}{q}$ components $X_{J_{n-q}k_1 \dots k_{q}}^{I_{k-q}k_1 \dots k_{q}}$ one for each possible $q$-element subset $K_{q}$ of $K=\{f_1 \dots f_{q}, b_1 \dots b_m\}$. Each of these components we consider, with a slight abuse of notation the following equations:
        \begin{align*}
            W_{1}^{0 (n,k)}(X_{J_{n-q}k_1 \dots k_{q}}^{I_{k-q}k_1 \dots k_{q}})= \sum_{k_{q+1}\dots k_{q+s} \in K\setminus K_{q}} Y_{J_{n-q}k_1 \dots k_{q}k_{q+1}\dots k_{q+s} }^{I_{k-q}k_1 \dots k_{q}k_{q+1}\dots k_{q+s}}.
        \end{align*}
        The components in the image are $\binom{m+q}{q+s}$. Note that, by Lemma \ref{lem:binomial_dimensions} and applying some simple property of the binomial coefficients we have
        \begin{align*}
            \binom{m+q}{q} \geq \binom{m+q}{q+s}.
        \end{align*}

        By Lemma \ref{lem:solution_system_surj} it is possible to invert this system and get 
            \begin{align*}
            Y_{J_{n-q} k_1 \dots k_{q+s}}^{I_{k-q} k_1 \dots k_{q+s}}= \sum_{i=0}^{q+s} \sum_{\substack{ k_j \dots k_{j+i} \in \{k_1 \dots k_{q+s}\}\\ h_{1} \dots h_{i} \in K \setminus\{k_1 \dots k_{q+s}\} }} C_i X_{J_{n-q}k_1 \dots \widehat{k_j}\dots \widehat k_{j+s+i}\dots k_{q+s} l_1 \dots l_{i}}^{I_{k-q}k_1 \dots \widehat{k_j}\dots \widehat k_{j+s+i}\dots k_{q+s} l_1 \dots l_{i}}.
        \end{align*}
    \end{itemize}
    Hence, for each component $Y_{J_{n-q} k_1 \dots k_{q+s}}^{I_{k-q} k_1 \dots k_{q+s}}$ we can get a linear combination of components of $X$ that generates exactly it, proving that the map is surjective.
\end{proof}

\section{Corollaries}\label{s:corollaries}

Since the results and proofs might seem rather abstract, let us make here some remarks and present some examples.

\begin{remark}
    The results of Theorem \ref{thm:Main_intro} agree with some previous results of which this article is a generalization. In particular for $N=4$ and $s=1$ some results for some particular choices of $(n,k)$ have been proven in \cite{CS2019} and later extend for some combination of $N$ and $s$ in \cite{CCS2020} in the context of Palatini--Cartan gravity (see Section \ref{s:PC_gravity}). We refer to the appendix of this article for an application of the proof presented in this paper in some specific cases. The strategy applied there is similar to the one in this note.
\end{remark}

As an easy corollary of Theorem \ref{thm:Main_intro} we can show that there is a symmetry between the conditions of injectivity and surjectivity.

\begin{corollary}\label{cor:symmetry}
    The map $W_{s}^{l (n,k)}$ defined in \eqref{e:definition_W} is injective if and only if $W_{s}^{l (n',k')}$ is surjective where $n'=N-l-n-s$ and $k'=N-k-s$.
\end{corollary}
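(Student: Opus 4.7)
The plan is to derive the statement as a direct algebraic consequence of Theorem \ref{thm:Main_intro}, by comparing the numerical conditions on the pairs $(n,k)$ and $(n',k')$.

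First I would recall the two inequalities coming from Theorem \ref{thm:Main_intro}: the map $W_{s}^{l(n,k)}$ is injective if and only if
\begin{equation*}
    n + k \leq N - l - s,
\end{equation*}
and a generic map $W_{s}^{l(n',k')}$ is surjective if and only if
\begin{equation*}
    n' + k' \geq N - s.
\end{equation*}
The key step is then to substitute $n' = N - l - n - s$ and $k' = N - k - s$ into the surjectivity condition and simplify:
\begin{equation*}
    n' + k' = (N - l - n - s) + (N - k - s) = 2N - l - 2s - (n+k).
\end{equation*}
Thus the surjectivity inequality $n' + k' \geq N - s$ becomes $2N - l - 2s - (n+k) \geq N - s$, which is equivalent to $n + k \leq N - l - s$, i.e.\ the injectivity condition for $W_{s}^{l(n,k)}$.

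Finally I would verify that the parameters $(n',k')$ lie in the admissible range so that $W_{s}^{l(n',k')}$ is indeed defined, namely $n' \leq N - l - s$ and $k' \leq N - s$, both of which follow immediately from $n, k \geq 0$. Putting these pieces together yields the claimed equivalence, with no further obstacle: the corollary is essentially a bookkeeping consequence of Theorem \ref{thm:Main_intro}, so the only ``hard part'' is carrying out the substitution correctly and checking that the admissibility ranges match.
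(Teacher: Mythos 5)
Your proposal is correct and follows essentially the same route as the paper: both proofs reduce the claim to the identity $n'+k' = 2N-l-2s-(n+k)$ and then translate between the injectivity bound $n+k \leq N-l-s$ and the surjectivity bound $n'+k' \geq N-s$ from Theorem \ref{thm:Main_intro}. Your added check that $(n',k')$ lies in the admissible range is a harmless extra verification not spelled out in the paper.
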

\begin{proof}
    Suppose that $W_{s}^{l (n,k)}$ is injective, i.e. $n+k \leq N  -l- s$. Then we have
    \begin{align*}
        n'+k'=2N-l-2s-(n+k)\geq2N-l-2s- (N  -l- s)=N-s
    \end{align*}
    which shows that $W_{s}^{l (n',k')}$ is surjective. Viceversa, let $W_{s}^{l (n',k')}$ be surjective, i.e. $n'+k' \geq N - s$. We get:
    \begin{align*}
        n+k=2N-l-2s-(n'+k') \leq 2N-l-2s- (N-s)= N-l-s
    \end{align*}
    which shows that $W_{s}^{l (n',k')}$ is injective.
\end{proof}

The following corollary shows that for $l=0$ in order to establish the surjectivity or injectivity of a map, it is sufficient to look at the dimensions of domain and codomain.

\begin{corollary}\label{cor:bulk_dimensions}
    Let $l=0$, then $W_{s}^{l (n,k)}$ is injective if and only if 
    \begin{align}
        \dim \Omega_0^{n,k} \leq \dim \Omega_0^{n+s,k+s}
    \end{align}
    and $W_{s}^{l (n,k)}$ is surjective if and only if 
    \begin{align}
        \dim \Omega_0^{n,k} \geq \dim \Omega_0^{n+s,k+s}.
    \end{align}
\end{corollary}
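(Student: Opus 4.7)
The plan is to reduce the corollary to a purely combinatorial equivalence about binomial coefficients. Since $l=0$ gives $\dim \Omega_0^{n,k}=\binom{N}{n}\binom{N}{k}$ and Theorem~\ref{thm:Main_intro} already characterizes injectivity and surjectivity of $W_{s}^{0 (n,k)}$ by $n+k\leq N-s$ and $n+k\geq N-s$ respectively, the corollary amounts to the equivalence
\[
\binom{N}{n}\binom{N}{k}\leq\binom{N}{n+s}\binom{N}{k+s}\iff n+k\leq N-s,
\]
together with the analogous statement with all inequalities reversed.

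One direction of each equivalence is automatic: an injective (respectively surjective) linear map between finite-dimensional vector spaces forces the domain dimension to be no greater (respectively no smaller) than the codomain dimension, so the condition on $n+k$ from Theorem~\ref{thm:Main_intro} immediately yields the corresponding dimensional inequality.

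For the converse I would cross-multiply and study the sign of
\[
D(n,k):=\prod_{i=1}^{s}(N-n-i+1)(N-k-i+1)\;-\;\prod_{i=1}^{s}(n+i)(k+i).
\]
On the critical line $n+k=N-s$ one substitutes $k=N-s-n$ and re-indexes $i\mapsto s+1-i$ in the first product; this turns it termwise into the second, so $D$ vanishes there. For any fixed $k$ the first product is strictly decreasing in $n$ while the second is strictly increasing, so $D$ is strictly monotone in $n$ and changes sign precisely at $n=N-s-k$, pinning down both the weak and strict forms of the equivalence.

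Combining these observations with Theorem~\ref{thm:Main_intro} yields both halves of the corollary at once. The only substantive step is the monotonicity check together with the re-indexing identity on the critical line; once these are in place, the equality case and the strict inequalities drop out simultaneously.
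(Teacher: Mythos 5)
Your proposal is correct, but it takes a genuinely different route from the paper's proof. Both arguments start the same way: by Theorem \ref{thm:Main_intro} (and the trivial rank--nullity direction), everything reduces to the combinatorial equivalence $\binom{N}{n}\binom{N}{k}\leq\binom{N}{n+s}\binom{N}{k+s}\iff n+k\leq N-s$, with the strict inequality needed off the critical line. You settle this by cross-multiplication: the quotient of the two dimensions is a ratio of products of linear factors, your difference $D(n,k)$ vanishes on $n+k=N-s$ via the re-indexing $i\mapsto s+1-i$, and strict monotonicity in $n$ at fixed $k$ pins down the sign of $D$ on either side, which handles injectivity and surjectivity in one stroke and makes the equality case transparent. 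The paper instead sets $n+k=N-s-r$, expands both products through the Vandermonde-type identity $\binom{N}{a}\binom{N}{b}=\sum_{j}\binom{a+b-j}{j,a-j,b-j}\binom{N}{a+b-j}$ from \cite{GKP89}, compares the two resulting sums term by term to determine the sign of the difference of dimensions, and then obtains the surjectivity half from the injectivity half via the symmetry of Corollary \ref{cor:symmetry}. Your factor-by-factor argument is more elementary (no multinomial identity, no appeal to Corollary \ref{cor:symmetry}) and delivers the strict comparison directly; the paper's expansion has the mild advantage of exhibiting the dimension difference as an explicit sum with termwise-controlled signs. The one point you should make explicit is that all factors $N-n-i+1$, $N-k-i+1$, $n+i$, $k+i$ with $1\leq i\leq s$ are positive because $n,k\leq N-s$ on the domain where $W_{s}^{0(n,k)}$ is defined; this is what legitimizes the cross-multiplication and the strict monotonicity claim.
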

\begin{proof}
    Let us begin from injectivity. The dimension of domain and codomain are 
    \begin{align*}
        \dim \Omega_0^{n,k} = \binom{N}{n}\binom{N}{k} \qquad \dim \Omega_0^{n+s,k+s} = \binom{N}{n+s}\binom{N}{k+s}.
    \end{align*}
    Let $n+k =N-s -r$, $r \in \mathbb{Z}$ and define $m=n+s$. Then we can write $k=N-n-s-r$ and 
    \begin{align*}
        \binom{N}{n}\binom{N}{k}&=\binom{N}{n}\binom{N}{N-n-s-r}=\binom{N}{n}\binom{N}{m+r}\\
        \binom{N}{n+s}\binom{N}{k+s}&=\binom{N}{n+s}\binom{N}{N-r-n}=\binom{N}{m}\binom{N}{n+r}.
    \end{align*}
    Using a well-known formula for the product of the binomial coefficient \cite[p. 171]{GKP89} we get:
    \begin{align*}
        \binom{N}{n}\binom{N}{k}&=\sum_{j=0}^{\mathrm{min}(n,m+r)} \binom{m+n+r-j}{j,m+r-j,n-j}\binom{N}{m+n+r-j}\\
        \binom{N}{n+s}\binom{N}{k+s}&=\sum_{j=0}^{\mathrm{min}(n+r,m)} \binom{m+n+r-j}{j,m-j,n+r-j}\binom{N}{m+n+r-j}
    \end{align*}
    where the first is the multinomial coefficient.
    Suppose now that the function is injective, i.e. $r\geq 0$. Then we have that $n=\mathrm{min}(n,m+r) < \mathrm{min}(n+r,m)$
     Hence we get
    \begin{align*}
        \binom{N}{n+s}\binom{N}{k+s}&-\binom{N}{n}\binom{N}{k} \\
        =& \sum_{j=0}^{n}\left[ \binom{m+n+r-j}{j,m-j,n+r-j}-\binom{m+n+r-j}{j,m+r-j,n-j}\right]\binom{N}{m+n+r-j}\\
        &+ \sum_{j=n}^{\mathrm{min}(n+r,m)} \binom{m+n+r-j}{j,m-j,n+r-j}\binom{N}{m+n+r-j}
    \end{align*}
We now prove that every element in the sum is positive and hence the dimension of the codomain is bigger than the dimension of the domain. By definition of the multinomial coefficient we get
\begin{align*}
    &\binom{m+n+r-j}{j,m-j,n+r-j}-\binom{m+n+r-j}{j,m+r-j,n-j}= \frac{(m+n+r-j)!}{j!(m-j)!(n+r-j)!}-\frac{(m+n+r-j)!}{j!(m+r-j)!(n-j)!}\\
    &= \frac{(m+n+r-j)!}{j!}\frac{(m+r-j)\dots (m+1-j) - (n+r-j)\dots(n+1-j)}{(m+r-j)!(n+r-j)!}\geq 0
\end{align*}
since $m>n$.

In order to prove the converse, let $r<0$. Then $n+r=\mathrm{min}(n+r,m)<\mathrm{min}(n,m+r) $ and we get
\begin{align*}
        \binom{N}{n+s}\binom{N}{k+s}&-\binom{N}{n}\binom{N}{k} \\
        = &\sum_{j=0}^{n+r}\left[ \binom{m+n+r-j}{j,m-j,n+r-j}-\binom{m+n+r-j}{j,m+r-j,n-j}\right]\binom{N}{m+n+r-j}\\
        &- \sum_{j=n+r}^{\mathrm{min}(n,m+r)} \binom{m+n+r-j}{j,m+r-j,n-j}\binom{N}{m+n+r-j}.
    \end{align*}
    Proceeding as before one gets
    \begin{align*}
        \binom{m+n+r-j}{j,m-j,n+r-j}-\binom{m+n+r-j}{j,m+r-j,n-j}\leq 0.
    \end{align*}
    Hence we conclude that the codomain has dimension smaller than the domain.

    For Surjectivity it is sufficient to combine the result of injectivity together with Corollary \ref{cor:symmetry}.
\end{proof}

\begin{remark}
The conclusion of Corollary \ref{cor:bulk_dimensions} are no longer true for higher codimension, $l>1$. As a counterexample, one can consider for $N=4$, $W_2^{1(1,1)}$ which is not surjective despite the dimension of the domain is bigger than the dimension of the codomain. For $l=1$ the proof of Corollary \ref{cor:bulk_dimensions} is no longer true, but no counterexample have been found so far.
\end{remark}

\begin{remark}
    Note that, because of the summand $l$ appearing in the formula in Theorem \ref{thm:Main_intro}, maps that are injective for $l=0$ might not be injective for higher codimension. 
\end{remark}

\section{Coframe of Palatini--Cartan gravity} \label{s:PC_gravity}
In this section we draw the connection between the maps \eqref{e:def_W_intro} and their properties and the coframe formulation of General Relativity, also known as Palatini--Cartan gravity. 

Let $M$ be an $N$-dimensional compact, oriented smooth manifold $M$ and let $\mathcal{V}$ and vector bundle over $M$ such that there is an orientation preserving bundle isomorphism covering the identity from the tangent bundle to $\mathcal{V}$ 
\begin{align*}
    e \colon TM \stackrel{\sim}{\longrightarrow}\mathcal{V}.
\end{align*}

We then consider a $SO(N-1,1)$-principal bundle $P \rightarrow M$ and $\mathcal{V}$ as the associated vector bundle by the standard representation. Then the isomorphism $e$ is one of the fundamental field of the theory, generating the metric, together with a principal connection $\omega$, see \cite{CCS2020} for a more detailed description. 

In field theory it is sometimes useful to consider such theories defined on stratified manifolds, defined as follows.
\begin{definition}\label{def:stratification}
Let $M$ be an $m$-dimensional smooth manifold. An $n$-stratification of $M$ is a filtration of smooth manifolds (possibly with boundary) $\{ M ^{\filt{k}}\}_{k=0\dots n}$ of dimension $\mathrm{dim}(M^{\filt{k}})=m-k$, with $\filtint{0}=M$, such that there exists a smooth embedding $\iota^{\filt{k+1}}\colon M^{\filt{k+1}} \to  M^{\filt{k}}$ for every $0\leq k < n$.
\end{definition}

Let then $M$ be a stratified manifold of dimension $N$ as in Definition \ref{def:stratification}. We will denote by $\mathcal{V}_{l}$  for the restriction $\mathcal{V}|_{M^{(l)}}$ of $\mathcal{V}$ to the $l$-th stratum $M^{(l)}$.
We can then consider 
\begin{align*}
    \Omega_l^{n,k}:= \Omega^n\left(M^{(l)}, \textstyle{\bigwedge^k} \mathcal{V}_l\right) \qquad \text{for } n\leq N-l \text{ and } k \leq N
\end{align*}
and, since $e$ can be naturally be viewed as a one form on $M$ with values on $\mathcal{V}$, i.e. $e \in \Omega^1(M, \mathcal{V})$, we define the maps
\begin{align}\label{e:def_W_intro2}
    W_{s}^{(n,k)}: \Omega^n\left(M, \textstyle{\bigwedge^k} \mathcal{V}\right)  & \longrightarrow \Omega^{n+s}\left(M, \textstyle{\bigwedge^{k+s}} \mathcal{V}\right) \\
    X  & \longmapsto   X \wedge \underbrace{e \wedge \dots \wedge e}_{s-times}. \nonumber
\end{align}

Locally, the coframe these maps are exactly the ones described in the introduction where now $V$ is the typical fiber of $TM$ and $Z$ is the typical fiber of $\mathcal{V}$. Since all the properties, and proofs of the maps \eqref{e:def_W_intro} are local, we conclude that also the maps \eqref{e:def_W_intro2} satisfy the properties described in Theorem \ref{thm:Main_intro} and all its corollaries.



The importance of these maps is related to the role that they play in the coframe formalism of General Relativity. Indeed, in this case we consider a $SO(N-1,1)$-principal bundle $P \rightarrow M$ and $\mathcal{V}$ is the associated vector bundle by the standard representation. Then the isomorphism $e$ is one of the fundamental field of the theory, generating the metric, together with a principal connection $\omega$, see \cite{CCS2020} for a more detailed description. 

In particular the injectivity or surjectivity of the maps \eqref{e:def_W_intro2} is crucial in order to find the right equation of motion from the action of the theory \cite{AlexandrovSpeziale15, Ashtekar1986, CS2019}. These properties are often used loosely, but adequate care is required when considering the corresponding maps in the case of a manifold with boundary (treated here as a particular case of a stratified manifold. Indeed, as it was pointed out first in \cite{CS2019} for $N=4$ (the physical case), the differences in invertibility of the maps $W_1^{0(2,1)}$ and $W_1^{1(2,1)}$ (respectively acting on the same forms on the bulk and on the boundary) is a crucial aspect of the structure of the reduced phase space of this theory in vacuum \cite{CS2019,CCT2020} or in presence of matter \cite{CCF2022,CCFT2023}. In this latter case, other maps of this type play an equally important role. Furthermore such properties turn out to be important also in the BV(-BFV) formulation of the coframe theory, both on the boundary \cite{CS2017,CCS2020,CCS2020b} and on the corner \cite{CC2023}.

Some particular cases of Theorem \ref{thm:Main_intro} have been proven in \cite{CS2019, CCS2020, C21}, and this note aims to solve once and for all the issues related to these maps.

As a concrete example we can fix $N=4$ (i.e. the physical four-dimensional space-time) and $s=1$.  Then the properties of the maps can be arranged in some tables, one for each codimension.
 We organize the $\Omega_{l}^{n,k}$ spaces in an array and connect them with arrows corresponding to the maps $W_1^{l(n,k)}$: a hooked arrow denotes an injective map while a two headed arrow denotes a surjective map. 

We start with codimension 0, i.e. with the table of the maps in the bulk:
\begin{center}
\begin{equation}
\begin{tikzcd}[ row sep= 2 em, column sep= 3 em]
\Omega^{0,0} \arrow[rd, hook]& \Omega^{0,1} \arrow[rd, hook]& \Omega^{0,2} \arrow[rd, hook]& \Omega^{0,3} \arrow[rd, hook, two heads]& \Omega^{0,4} \\
\Omega^{1,0} \arrow[rd, hook]& \Omega^{1,1} \arrow[rd, hook]& \Omega^{1,2} \arrow[rd, hook, two heads]& \Omega^{1,3} \arrow[rd, two heads]& \Omega^{1,4} \\
\Omega^{2,0} \arrow[rd, hook]& \Omega^{2,1} \arrow[rd, hook, two heads]& \Omega^{2,2} \arrow[rd, two heads]& \Omega^{2,3} \arrow[rd, two heads]& \Omega^{2,4} \\
\Omega^{3,0} \arrow[rd, hook, two heads]& \Omega^{3,1} \arrow[rd, two heads]& \Omega^{3,2} \arrow[rd, two heads]& \Omega^{3,3} \arrow[rd, two heads]& \Omega^{3,4} \\
\Omega^{4,0} & \Omega^{4,1} & \Omega^{4,2} & \Omega^{4,3} & \Omega^{4,4} 
\end{tikzcd}
\end{equation}
\end{center}
On the boundary, i.e. in codimension 1, the index $i$ runs only between 1 and 3. Note that the properties of the maps heavily depend on the codimension in which we are working. We have the following table:
\begin{center}
\begin{equation}
\begin{tikzcd}[ row sep= 2 em, column sep= 3 em]
\Omega_{\partial}^{0,0} \arrow[rd, hook]& \Omega_{\partial}^{0,1} \arrow[rd, hook]& \Omega_{\partial}^{0,2} \arrow[rd, hook]& \Omega_{\partial}^{0,3} \arrow[rd, two heads]& \Omega_{\partial}^{0,4} \\
\Omega_{\partial}^{1,0} \arrow[rd, hook]& \Omega_{\partial}^{1,1} \arrow[rd, hook]& \Omega_{\partial}^{1,2} \arrow[rd, two heads]& \Omega_{\partial}^{1,3} \arrow[rd, two heads]& \Omega_{\partial}^{1,4} \\
\Omega_{\partial}^{2,0} \arrow[rd, hook]& \Omega_{\partial}^{2,1} \arrow[rd, two heads]& \Omega_{\partial}^{2,2} \arrow[rd, two heads]& \Omega_{\partial}^{2,3} \arrow[rd, two heads]& \Omega_{\partial}^{2,4} \\
\Omega_{\partial}^{3,0} & \Omega_{\partial}^{3,1} & \Omega_{\partial}^{3,2} & \Omega_{\partial}^{3,3} & \Omega_{\partial}^{3,4} 
\end{tikzcd}
\end{equation}
\end{center}
In codimension 2 we have the following table:
\begin{center}
\begin{equation}\label{e:cod2We}
\begin{tikzcd}[ row sep= 2 em, column sep= 3 em]
\Omega_{\partial\partial}^{0,0} \arrow[rd, hook]& \Omega_{\partial\partial}^{0,1} \arrow[rd, hook]& \Omega_{\partial\partial}^{0,2} \arrow[rd]& \Omega_{\partial\partial}^{0,3} \arrow[rd, two heads]& \Omega_{\partial\partial}^{0,4} \\
\Omega_{\partial\partial}^{1,0} \arrow[rd, hook]& \Omega_{\partial\partial}^{1,1} \arrow[rd]& \Omega_{\partial\partial}^{1,2} \arrow[rd, two heads]& \Omega_{\partial\partial}^{1,3} \arrow[rd, two heads]& \Omega_{\partial\partial}^{1,4} \\
\Omega_{\partial\partial}^{2,0} & \Omega_{\partial\partial}^{2,1} & \Omega_{\partial\partial}^{2,2} & \Omega_{\partial\partial}^{2,3} & \Omega_{\partial\partial}^{2,4}
\end{tikzcd}
\end{equation}
\end{center}
We conclude with the table for codimension 3, denoting the spaces by $\Omega_{\partial\partial\partial}^{i,j}$ defined in a similar way to the ones in lower codimension. These properties are not used in this thesis but are presented here for future convenience.
\begin{center}
\begin{equation}
\begin{tikzcd}[ row sep= 2 em, column sep= 3 em]
\Omega_{\partial\partial\partial}^{0,0} \arrow[rd, hook]& \Omega_{\partial\partial\partial}^{0,1} \arrow[rd ]& \Omega_{\partial\partial\partial}^{0,2} \arrow[rd]& \Omega_{\partial\partial\partial}^{0,3} \arrow[rd, two heads]& \Omega_{\partial\partial\partial}^{0,4} \\
\Omega_{\partial\partial\partial}^{1,0} & \Omega_{\partial\partial\partial}^{1,1} & \Omega_{\partial\partial\partial}^{1,2} & \Omega_{\partial\partial\partial}^{1,3}& \Omega_{\partial\partial\partial}^{1,4} 
\end{tikzcd}
\end{equation}
\end{center}
\begin{refcontext}[sorting=nyt]
    \printbibliography[] 
\end{refcontext}

\end{document}